\newcommand\rurl[1]{%
  \href{http://#1}{\nolinkurl{#1}}%
}
\def\black#1{{\color{black}#1}}
\newcommand{\norm}[1]{\left\lVert#1\right\rVert}
\newcommand{\linfnorm}[1]{\left\lVert#1\right\rVert_{\mathcal{L}_\infty}}
\newcommand{\linfnormtrucI}[2]{\norm{#1}_{\mathcal{L}_\infty^{#2}}}
\newcommand{\linfnormtruc}[2]{\left\lVert#1\right\rVert_{\mathcal{L}_\infty^{[0,#2]}}}
\def\lonew{${\mathcal{L}_1}$ }
\def\xdot{\dot x}
\def \xstar {x^\star}
\def \ustar {u^\star}
\def \wstar {w^\star}
\def\mbR{\mathbb{R}}
\def\mbI{\mathbb{I}}
\def\mcA{\mathcal{A}}
\def\mcB{\mathcal{B}}
\def\mcC{\mathcal{C}}
\def\mcD{\mathcal{D}}
\def\mcX{\mathcal{X}}
\def\mcU{\mathcal{U}}
\def\trieq{\triangleq}
\newtheorem{theorem}{Theorem}
\newtheorem{lemma}{Lemma}
\theoremstyle{definition}  \newtheorem{definition}{Definition}
\theoremstyle{definition} \newtheorem{assumption}{Assumption}
\theoremstyle{remark}  
\newtheorem{remark}{Remark}
\newcommand{\ubar}[1]{\underaccent{\bar}{#1}}
\def\cl@part {\@elt {chapter}}
\crefname{equation}{}{} 
\crefname{lemma}{Lemma}{Lemmas}
\crefname{theorem}{Theorem}{Theorems}
\crefname{table}{Table}{Tables}
\crefname{figure}{Fig.}{Figs.}
\crefname{remark}{Remark}{Remarks}
\crefname{assumption}{Assumption}{Assumptions}
\crefname{section}{Section}{Sections}
\crefname{definition}{Definition}{Definitions}
\crefname{algorithm}{Algorithm}{Algorithms}
\renewcommand*\env@matrix[1][\arraystretch]{%
  \edef\arraystretch{#1}%
  \hskip -\arraycolsep
  \let\@ifnextchar\new@ifnextchar
  \array{*\c@MaxMatrixCols c}}
\def\mcx{{\mathcal{X}}}
\def\mcC{{\mathcal{C}}}
\def\mcU{{\mathcal{U}}}
\def \xstar {x^\star}
\def \ustar {u^\star}
\def \wstar {w^\star}
\def \zstar {z^\star}
\def \Linf{$\mathcal{L}_\infty$}
\def \Ltwo{$\mathcal{L}_2$}
\def \rieman {Riemannian}
\def \opt {\mathcal{O}\mathcal{P}\mathcal{T}}
\def \optrccm {$\opt_{RCCM}$} 
\def \optref {$\opt_{REF}$} 
\def \xnom{x^\star}
\def \dotxnom{\dot x^\star}
\def \cI {(\textbf{C1})}
\def \cII {(\textbf{C2})}
\def \cIII {(\textbf{C3})}
\begin{document}

\title{\LARGE \bf
Tube-Certified Trajectory Tracking for Nonlinear Systems  \\ With Robust Control Contraction Metrics\\
(Extended Version)}

\author{Pan Zhao$^1$, Arun Lakshmanan$^1$, Kasey Ackerman$^1$, Aditya Gahlawat$^1$, Marco Pavone$^2$, and Naira Hovakimyan$^1$
\thanks{This work is supported by AFOSR and NSF under the NRI grants \#1830639 and \#1830554,   and RI grant \#2133656. } 
\thanks{$^1$P.~Zhao, A. Lakshmanan, K. Ackerman, A. Gahlawat and N.~Hovakimyan are with the Department of Mechanical Science and Engineering, University of Illinois at Urbana-Champaign, Urbana, IL 61801, USA. Email: \texttt{\{panzhao2, lakshma2, gahlawat, kaacker2, nhovakim\}@illinois.edu}. {\it(Corresponding author: P.~Zhao.)}}
\thanks{$^2$M. Pavone is with the Department of Aeronautics and Astronautics, Stanford University, Stanford, CA 94305, USA. Email: \texttt{pavone@stanford.edu}.}
} 

\maketitle
\thispagestyle{plain}
\pagestyle{plain}


\vspace{-1cm}
\begin{abstract}
This paper presents an approach towards guaranteed trajectory tracking for nonlinear control-affine systems subject to external disturbances based on robust control contraction metrics (CCM) that aims to minimize the \Linf~gain from the disturbances to nominal-actual trajectory deviations. The guarantee is in the form of invariant tubes, computed offline and valid for any nominal trajectories, in which the actual states and inputs of the system are guaranteed to stay despite disturbances. Under mild assumptions, we prove that the proposed robust CCM (RCCM) approach yields  tighter tubes than an existing approach based on CCM and input-to-state stability analysis. We show how the RCCM-based tracking controller together with tubes can be incorporated into a feedback motion planning framework to plan safe trajectories for robotic systems. {Simulation results illustrate the effectiveness of the proposed method and empirically demonstrate reduced conservatism compared to the CCM-based approach.}
\end{abstract}

\begin{IEEEkeywords}
Planning under uncertainty, robot safety, integrated planning and control,  robust  control, nonlinear systems
\end{IEEEkeywords}
\IEEEpeerreviewmaketitle

\section{Introduction}
Motion planning for robots with nonlinear and underactuated dynamics --  with guaranteed safety in the presence of uncertainties -- remains to be a challenging problem. The uncertainties can cause the robot's actual state trajectory to significantly deviate from its nominal behavior, causing collisions, especially when a nominal input trajectory is executed in an open-loop fashion (see Fig.~\ref{fig:illustration} for an illustration). {\it Feedback motion planning} (FMP) aims to mitigate the effect of uncertainties through the use of a feedback controller that tracks a nominal (or desired) trajectory. 
A common practice in FMP to ensure vehicle safety with respect to dynamic constraints and collision avoidance involves design of the tracking controller and computation of a {\it tube} or {\it funnel} about a nominal trajectory which is guaranteed to contain the actual trajectory in the presence of  uncertainties or disturbances.
\begin{figure}[htb]
    \centering
    \includegraphics[width=0.75\columnwidth]{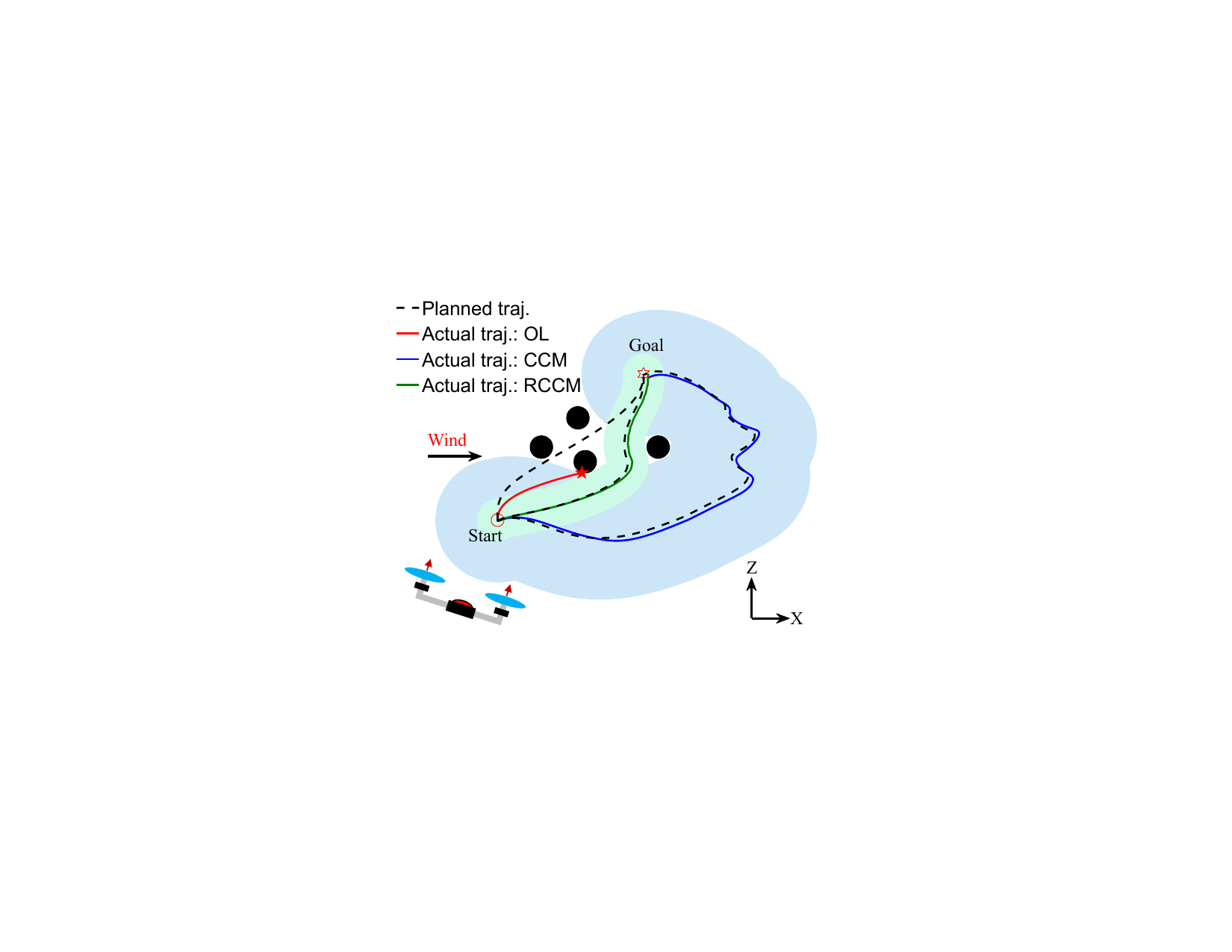}
    \caption{Planning and control of a planar VTOL vehicle in the presence of wind disturbances. Light-blue and light-green shaded areas denote the tube associated with the CCM controller from \cite{singh2019robust}, and the proposed RCCM controller, respectively. Dashed lines denote the planned trajectory without using tubes (left) and with CCM (right) and RCCM (middle) tubes. OL: open loop.}
    \label{fig:illustration}
    \vspace{-3mm}
\end{figure}
\subsection{Related Work}
Various scenarios and methods have been studied for computing and/or optimizing such tubes or funnels. For the special case of fully-actuated systems, the tubes may be computed and optimized using sliding mode control~\cite{lopez2018robust-sliding}. Approximation of such sets may also be obtained by linear reachability analysis via linearization around a nominal trajectory and treating  nonlinearities as bounded disturbances~\cite{althoff2014online-reachability}; however, these results are generally overly conservative. In~\cite{manchester2019robust-funnels}, the authors used linear analysis (i.e., propagation of ellipsoids under linearized dynamics) to compute the size of approximate invariant funnels, and further leveraged it to optimize the nominal trajectory. However, the linearity  assumption usually only holds in a small region around the nominal trajectory; furthermore, these methods usually rely on one-off offline computations and are not suitable for real-time motion planning.  
																												 
Convex programming-based verification methods such as sum of squares (SOS) programming have also gained popularity in FMP. For instance, the LQR tree algorithm in~\cite{tedrake2010lqrtree} combines local LQR feedback controllers with funnels to compose a nonlinear feedback policy to cover reachable areas. This method requires the task and environment to be pre-defined due to reliance on offline computations and is not suitable for real-time planning. The funnel library approach in~\cite{majumdar2017funnel} aims to alleviate this issue and enable online re-planning by leveraging SOS programming to compute, offline, a library of funnels around a set of nominal trajectories, in which the state is guaranteed to remain despite bounded disturbances. These funnels are then composed online for re-planning to avoid obstacles.  However, this method is still restricted to a {\it fixed set} of trajectories computed offline. 
																																			 
The concept of tube or funnel has also been explored extensively within Tube Model Predictive Control (TMPC), where one computes a tracking feedback (also termed as ancillary) controller that keeps the state within an invariant tube around the nominal MPC trajectory despite disturbances. TMPC has been extensively studied for linear systems with bounded disturbances or model uncertainties \cite{langson2004robust-tube-mpc,mayne2005robust-tube-mpc,rakovic2012parameterized-tube-mpc}. The construction of invariant tubes and ancillary controllers in the nonlinear setup is much more complicated than in the linear case. For instance, \cite{rakovic2009set} simply {assumed} the existence of a stabilizing (nonlinear) ancillary controller that results in contracting set iterates. Similarly, assuming the existence of a stabilizing feedback controller and a Lyapunov function,~\cite{marruedo2002input-tube-mpc} constructed a tube based on a Lipschitz constant of the dynamics.  This approach, although simple to apply, becomes very conservative for larger prediction horizons. In~\cite{yu2013tube-mpc}, a quadratic Lyapunov-type function with a linear auxiliary controller is computed offline, which is then used to design a robust MPC scheme for a limited class of nonlinear systems, i.e., linear systems with Lipschitz nonlinearities. For the special case of feedback linearizable systems,~\cite{lopez2019dynamic-tube-mpc} used a boundary layer sliding controller as an auxiliary controller, which enables the tube to be  parameterized as a polytope and its geometry to be co-optimized in the MPC problem. The authors of~\cite{bayer2013discrete-tube-mpc} used incremental input-to-state stability ($\delta$-ISS) for discrete-time systems to derive invariant tubes as a sublevel set of the associated $\delta$-ISS Lyapunov function, which was {\it assumed} to be given. Recently in~\cite{kohler2020computationally-tube-mpc}, for incrementally (exponentially) stabilizable nonlinear
systems subject to  nonlinear state and input dependent disturbances/uncertainty, the authors leveraged  scalar bounds of an incremental Lyapunov function, computed offline, to online predict the tube size, which is incorporated in the MPC optimization problem for constraint tightening.

Recent work has explored contraction theory within FMP. Contraction theory~\cite{lohmiller1998contraction} is a method for analyzing nonlinear systems in a differential framework and is focused on studying the convergence between pairs of state trajectories towards each other, i.e., incremental stability. It has recently been extended for constructive control design, e.g., via control contraction metrics (CCM)~for both deterministic \cite{manchester2017control} and stochastic systems \cite{tsukamoto2020neural-contraction,tsukamoto2020robust-stochastic}.  Compared to incremental Lyapunov function approaches for studying incremental stability, contraction metrics is an {\it intrinsic} characterization of incremental stability (i.e., invariant under change of coordinates); additionally, the search for a CCM and the stabilizing controller can be formulated as a convex optimization problem. Leveraging CCM, the authors of~\cite{singh2019robust} designed a feedback tracking controller for a nominal nonlinear system and derived tubes in which the actual states are guaranteed to remain despite bounded disturbances using input-to-state stability (ISS) analysis. For systems with matched uncertainties, the authors of  \cite{lakshmanan2020safe}  designed an \lonew adaptive controller to augment a nominal CCM controller and showed that the resulting tube's size could be systematically reduced by tuning some parameters of the adaptive controller, while the method in \cite{zhao2022guaranteed-contraction-imperfect} based on robust Riemannian energy conditions and disturbance estimation guaranteed exponential convergence to nominal trajectories despite the uncertainties. Finally, robust CCM was leveraged in~\cite{manchester2018rccm} to synthesize nonlinear controllers that minimize the $\mathcal{L}_2$ gain from disturbances to outputs. This method, however, does not provide tubes to quantify the {\it transient} behavior of states and inputs. 

\subsection{Contribution}
For nonlinear control-affine systems subject to bounded disturbances, this paper presents a tracking controller based on  robust CCM (RCCM) to minimize the \Linf~gain from disturbances to state and input trajectory deviations. By solving convex optimization problems offline, the proposed RCCM scheme produces a fully nonlinear tracking controller with {\it explicit disturbance rejection} property together with {\it certificate tubes} around nominal trajectories, for both states and inputs, in which the actual state/input variables are guaranteed to stay despite disturbances. 
In comparison, 
most of existing work in FMP and TMPC usually first designs ancillary controllers without considering the disturbances and then derives invariant tubes/funnels in the presence of disturbances using either ISS analysis (e.g., \cite{singh2019robust}), Lipschitz properties of the dynamics (e.g., \cite{marruedo2002input-tube-mpc}) or SOS verification (e.g., \cite{tedrake2010lqrtree,majumdar2017funnel}). 
We further prove, under mild assumptions, that the proposed RCCM approach yields {\it tighter} tubes than the CCM approach in \cite{singh2019robust}, which ignores the disturbance in designing the tracking controller and relies on ISS analysis to derive the tubes. As an additional contribution, we illustrate how the RCCM controller and the tubes can be incorporated into an FMP framework to plan guaranteed-safe trajectories, and {verify the proposed RCCM scheme on a planar vertical take-off and landing (VTOL) vehicle and a 3D quadrotor}. Specifically, compared to the CCM approach, our RCCM approach demonstrates improved tracking performance and reduced tube size for both states and inputs, leading to more aggressive yet safe trajectories (See Fig.~\ref{fig:illustration}). 

{\it Organization of the paper}. \cref{sec:preliminaries} states the problem and some preliminary material. \cref{sec:RCCM} presents the RCCM minimizing the \Linf ~gain and its application in designing nonlinear trajectory tracking controllers with certificate tubes for transient performance guarantee. In \cref{sec:compare-with-ccm}, the proposed RCCM controller is compared with an existing CCM controller. \cref{sec:app-mp} illustrates how the RCCM controller can be incorporated into a feedback motion planning framework. Verification of the proposed controller on simulated planar VTOL and 3D quadrotor examples is included in \cref{sec:simulations}. 

{\it Notations}.  Let $\mathbb{R}^n$, $\mathbb{R}^+$ and $\mathbb{R}^{m\times n}$ denote  the $n$-dimensional real vector space, the set of non-negative real numbers, and the set of real $m$ by $n$ matrices, respectively.
$I_n$ and ${0}$ denote an $n\times n$ identity matrix, and  a zero matrix of compatible dimensions, respectively.
$\norm{\cdot}$ denotes the $2$-norm of a vector or a matrix. 
The space $\mathcal{L}_{\infty e}$ is the set of signals on $[0,\infty)$ which, truncated to any finite interval $[a,b]$, have finite amplitude.
The $\mathcal{L}_\infty$- and truncated $\mathcal{L}_\infty$-norm of a function $x:\mathbb{R}^+ \rightarrow\mathbb{R}^n$ are defined as $\norm{x}_{\mathcal{L}_\infty}\triangleq \sup_{t\geq 0}\norm{x(t)}$ and $\linfnormtruc{x}{T}\triangleq \sup_{0\leq t\leq T}\norm{x(t)}$, respectively. Let $\partial_y F(x)$ denote the Lie derivative of the matrix-valued function $F$ at $x$ along the vector $y$. For symmetric matrices $P$ and $Q$, $P>Q$ ($P\geq Q$) means $P-Q$ is positive definite (semidefinite). $\langle X\rangle$ is the shorthand notation of $X +X^\top$.

\section{Problem statement and Preliminaries}\label{sec:preliminaries}
Consider a nonlinear control-affine system 
\begin{equation}\label{eq:dynamics}
\begin{split}
    \xdot &=f(x) + B(x) u + B_w(x) w,\\
    z &= g(x,u),
\end{split}
\end{equation}
where $x(t)\in\mbR^{n}$ is the state vector, $u(t)\in \mbR^m$ is the control input vector, $w(t)\in \mbR^{p}$ is the disturbance vector and $z(t)\in \mbR^q$ denotes the variables related to the performance (with  $z=x$ or $z=u$ as a special case), and $f(x)$, $B(x)$ and $B_w(x)$ are known vector/matrix functions of compatible dimensions. 
We use $b_i$ and $b_{w,i}$ to represent the $i$th column of $B(x)$ and $B_w(x)$, respectively. 

For the system in \cref{eq:dynamics}, assume we have a nominal state and input trajectory, $\xstar(\cdot)$ and $\ustar(\cdot)$, which satisfy the nominal dynamics:
\begin{equation}\label{eq:dynamics-nom}
\begin{split}
    \dot{x}^\star &=f(\xstar) + B(\xstar) \ustar + B_w(\xstar) w^\star,\\
\end{split}
\end{equation}
where $w^\star$ is the vector of nominal disturbances (including $\wstar(t)\equiv 0 $ as a special case).

For the system \cref{eq:dynamics}, this paper is focused on designing a state-feedback controller in the form of
\begin{equation}\label{eq:controller-form}
    u(t) = k(x(t),\xstar(t)) + \ustar(t)
\end{equation}
to minimize the gain from disturbance deviation, $w-\wstar$, to output deviation, $z-\zstar$, of the closed-loop system (obtained by applying the controller \eqref{eq:controller-form} to \eqref{eq:dynamics}):
\begin{equation}\label{eq:dynamics_cl}
\begin{split}
   \dot x & = f(x) +B(x) \left( k(x,x^\star)+u^\star\right) + B_w(x)w, \\
   z & = g(x,k(x,x^\star)+u^\star)
\end{split}
\end{equation}
Formally, such gain is quantified using the concept of {\it universal} \Linf~gain  defined as follows.  Hereafter, we  use universal \Linf~gain and \Linf~gain interchangeably. 
\begin{definition} \label{defn:Linf-gain-universal}
(Universal \Linf~gain)
A control system \cref{eq:dynamics_cl} achieves a universal \Linf-gain bound of $\alpha$ if for any target trajectory $x^\star, w^\star, z^\star$ satisfying  \eqref{eq:dynamics_cl}, any initial condition $x(0)$, and input $w$ such that ${w-w^\star}
\!\in\! \mathcal{L}_{\infty e}$, the condition
 \begin{equation}\label{eq:Linf-gain-bound}
 \hspace{-2mm}  \linfnormtrucI{z\!-\!z^\star}{[0,T]}^2 \! \leq\! \alpha^2 \!\linfnormtrucI{w\!-\!w^\star}{[0,T]}^2\!+\!\beta(x(0),\xstar(0)),\ \forall T\!>\!0,
    \end{equation}
 holds for a function $\beta(x_1,x_2)\!\geq\! 0$ with $\beta(x,x)\!=\!0$ for all $x$. 
\end{definition}
\begin{remark}
The \Linf-gain bound $\alpha$ in \cref{defn:Linf-gain-universal} naturally provides {\it certificate tubes} to quantify how much the actual trajectory $z(\cdot)$ 
deviates from the nominal trajectory $\zstar(\cdot)$. 
For instance, by setting $z=x$ and $x(0)=\xstar(0)$ and using a worst-case estimate of $\linfnormtruc{w-\wstar}{T}$, denoted by $\bar w$ (i.e., $\linfnormtruc{w-\wstar}{T}\leq \bar w$),  the inequality \cref{eq:Linf-gain-bound} implies
$x \!\in\! \Omega(\xstar)\trieq \left\{y\in \mbR^n\!: \linfnormtruc{y\!-\!\xstar}{T} \!\leq\! \alpha \bar w\right\}$ for any $T>0$. 
\end{remark}
\begin{remark}
\cref{defn:Linf-gain-universal} is inspired by the concept of universal \Ltwo~gain in \cite{manchester2018rccm}. However, unlike the \Linf~gain in \cref{defn:Linf-gain-universal}, the \Ltwo~gain does not produce tubes to quantify the {\it transient} behavior of the variable $z$.
\end{remark}

\subsection{Preliminaries}
CCM is a tool for controller synthesis to ensure incremental stability of a nonlinear system by studying the variational system, characterized by the differential dynamics. 
In this paper, we propose RCCM to design the controller \cref{eq:controller-form} to achieve or minimize an \Linf-gain bound.
The differential dynamics associated with \eqref{eq:dynamics} are given by 
\begin{equation}\label{eq:diff_dynamics}
  \begin{gathered}
  {{\dot \delta }_x} = A(x,u,w){\delta _x} + B(x){\delta _u} + B_w(x){\delta _w}, \hfill \\
  {\delta _z} = C(x,u){\delta _x} + D(x,u){\delta _u}, \hfill \\ 
\end{gathered}   
\end{equation}
where $A(x,u,w) \triangleq \frac{{\partial f}}{{\partial x}} + \sum\limits_{i = 1}^m {\frac{{\partial {b_i}}}{{\partial x}}} {u_i} + \sum\limits_{i = 1}^p {\frac{{\partial {b_{w,i}}}}{{\partial x}}} {w_i}$,  $C(x,u) \trieq \frac{{\partial g}}{{\partial x}}$ and  $D(x,u) \trieq \frac{{\partial g}}{{\partial u}}$.

Defining $K(x,\xstar) \trieq \frac{\partial k}{\partial x}$ with $k$ characterizing the control law \cref{eq:controller-form}, we obtain the differential dynamics of the closed-loop system \eqref{eq:dynamics_cl} as 
\begin{equation}\label{eq:diff_dynamics_cl}
  {{\dot \delta }_x} = {\mcA}{\delta _x} + \mcB{\delta _w},\quad 
  {\delta _z} = \mcC {\delta _x} + \mcD \delta w , 
\end{equation}
where \begin{equation}\label{eq:diff_cl-matrices}
 {\mcA}\triangleq {(A + BK)},\ \mcB = B_w,\  \mcC\triangleq {(C + DK)},\  \mcD = 0.   
\end{equation}
Our solution also involves the {\it differential \Linf ~gain}. 
\begin{definition} (Differential \Linf~gain)
A system with its differential dynamics represented by \eqref{eq:diff_dynamics_cl} has a differential \Linf-gain bound of $\alpha>0$ if for all $T>0$, we have
\begin{equation}\label{eq:diff-Linf-gain}
 \linfnormtrucI{\delta_z}{[0,T]}^2  \leq \alpha^2 \linfnormtrucI{\delta_w}{[0,T]}^2+\beta(x(0),\delta_x(0)),
\end{equation}
for some function $\beta(x,\delta x)$ with $\beta(x,0)=0$ for all $x$.
\end{definition}

Before proceeding to the main results, we first introduce some notations related to Riemannian geometry, most of which are from \cite{manchester2018rccm}. A Riemannian metric on $\mbR^n$ is a symmetric positive-definite matrix function $M(x)$, smooth in $x$, which defines a ``local Euclidean'' structure for any two tangent vectors $\delta_1$ and $\delta_2$ through the inner product $\langle \delta_1, \delta_2 \rangle_x\trieq \delta_1^\top M(x) \delta_2$  and the norm $\sqrt{\langle \delta_1, \delta_2 \rangle_x}$. A metric is called {\it uniformly bounded} if $a_1I\leq M(x) \leq a_2I$ holds $\forall x$ and for some scalars $a_2\geq a_1>0$.  Let $\Gamma(a,b)$ be the set of smooth paths between two points $a$ and $b$ in $\mbR^n$, where each $c\in\Gamma(a,b)$ is a piecewise smooth mapping, $c:[0,1]\rightarrow \mbR^n$,  satisfying $c(0)=a, c(1) = b$. We use the notation $c(s),~s\in[0,1]$, and $c_s(s)\trieq \frac{\partial c}{\partial s}$.
Given a metric $M(x)$, the {\it energy} of a path $c$ is defined as 
$
E(c)\trieq \int_0^1 c_s^\top M(c(s))c_s(s)ds.
$
We also use the notation $E(a,b)$ to denote the minimal energy of a path joining $a$ and $b$, i.e., $E(a,b)\trieq \inf_{c\in\Gamma(a,b)}E(c)$.

\section{Robust CCM  for Tube-Certified Trajectory Tracking}\label{sec:RCCM}
We first introduce an approach to designing a fully nonlinear controller in the form of \cref{eq:controller-form} to achieve a given \Linf-gain bound or minimize such a bound, leveraging RCCM. We then present the derivation and optimization of the certificate tubes around nominal state and control input trajectories, in which the actual states and inputs are guaranteed to stay.

\subsection{RCCM for universal \Linf~gain guarantee}
Existing work, e.g., \cite{Scherer97Multi}, provides solutions to controller design for a linear time-invariant~(LTI) system for standard \Linf-gain guarantee/minimization using linear matrix inequality~(LMI) techniques. We now extend this result to nonlinear systems for differential \Linf-gain guarantee/minimization, summarized in the following lemma. 
\begin{lemma}\label{lem:linf-gain-analysis}
The closed-loop system \eqref{eq:dynamics_cl} has a differential \Linf-gain bound of $\alpha>0$ if there exists a uniformly-bounded symmetric metric $M(x)>0$ and positive  constants $\lambda$ and $\mu$ such that for all $x,w$, we have
\begin{align}
     \left[ {\begin{array}{*{20}{c}}
  {\left\langle {M\mathcal{A}} \right\rangle  + \dot M + \lambda M}&{M \mcB} \\ 
  {{\mcB^ \top }M}&{ - \mu I_p} 
\end{array}} \right] \leq 0, \hfill \label{eq:diff-Linf-gain-analysis-1} \\
  \left[ {\begin{array}{*{20}{c}}
  {\lambda M}&0&{{\mathcal{C}^ \top }} \\ 
  0&{(\alpha  - \mu )I_p}&{{\mathcal{D}^ \top }} \\ 
  \mathcal{C}&\mathcal{D}&{\alpha I_q} 
\end{array}} \right] \geq 0,  \hfill  \label{eq:diff-Linf-gain-analysis-2}
\end{align}
where $\dot M \trieq \sum_{i=1}^n \frac{\partial M}{\partial x_i}\dot x_i$ with $\dot x_i$ given by \eqref{eq:dynamics_cl}.
\end{lemma}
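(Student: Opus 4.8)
The plan is to use $V(x,\delta_x)\trieq \delta_x^\top M(x)\delta_x$ as a differential storage function and to show that the two LMIs, after congruence transformations, are precisely a differential dissipation inequality and a pointwise bound on $\delta_z$, which can then be chained to obtain \eqref{eq:diff-Linf-gain}. Fix a target trajectory, an initial pair $(x(0),\delta_x(0))$, and a truncated disturbance $\delta_w\in\mathcal{L}_{\infty e}$; along the corresponding solution of \eqref{eq:diff_dynamics_cl} (which exists and is unique on every finite interval, since the system is linear time-varying in $\delta_x$ with coefficients bounded on compact time intervals) the map $t\mapsto V(t)\trieq\delta_x(t)^\top M(x(t))\delta_x(t)$ is absolutely continuous, and from $\dot\delta_x=\mcA\delta_x+\mcB\delta_w$ one computes, for almost every $t$, $\dot V=\delta_x^\top(\langle M\mcA\rangle+\dot M)\delta_x+2\delta_x^\top M\mcB\delta_w$. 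Uniform boundedness $a_1 I\le M(x)\le a_2 I$ ensures $V$ is finite and comparable to $\norm{\delta_x}^2$, so it is a legitimate candidate.

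First I would pre- and post-multiply the first LMI \eqref{eq:diff-Linf-gain-analysis-1} by $[\delta_x^\top\ \ \delta_w^\top]$ and its transpose, which yields $\dot V+\lambda V\le \mu\norm{\delta_w}^2$. Viewing this as a scalar linear differential inequality and using the integrating-factor (comparison) argument gives $V(t)\le e^{-\lambda t}V(0)+\mu\int_0^t e^{-\lambda(t-\tau)}\norm{\delta_w(\tau)}^2\,d\tau\le e^{-\lambda t}V(0)+\tfrac{\mu}{\lambda}\linfnormtrucI{\delta_w}{[0,t]}^2$.

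Next I would apply a Schur complement to the second LMI \eqref{eq:diff-Linf-gain-analysis-2} about the positive block $\alpha I_q$ (valid since $\alpha>0$) and then a congruence with $[\delta_x^\top\ \ \delta_w^\top]$; recalling $\delta_z=\mcC\delta_x+\mcD\delta_w$ (the argument does not rely on $\mcD=0$), this collapses to $\tfrac1\alpha\norm{\delta_z}^2\le\lambda V+(\alpha-\mu)\norm{\delta_w}^2$. Substituting the bound on $V(t)$ and using $\norm{\delta_w(t)}^2\le\linfnormtrucI{\delta_w}{[0,t]}^2$ gives $\norm{\delta_z(t)}^2\le\alpha\lambda e^{-\lambda t}V(0)+\bigl(\alpha\mu+\alpha(\alpha-\mu)\bigr)\linfnormtrucI{\delta_w}{[0,t]}^2=\alpha\lambda e^{-\lambda t}V(0)+\alpha^2\linfnormtrucI{\delta_w}{[0,t]}^2$. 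Taking the supremum over $t\in[0,T]$ and using $e^{-\lambda t}\le 1$ together with monotonicity of the truncated norm yields \eqref{eq:diff-Linf-gain} with $\beta(x(0),\delta_x(0))\trieq\alpha\lambda\,\delta_x(0)^\top M(x(0))\delta_x(0)\ge 0$, which vanishes whenever $\delta_x(0)=0$.

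The individual steps are routine matrix/scalar manipulations; the one place needing care is the arithmetic of the constants — the $\mu$ produced by the dissipation inequality must cancel exactly against the $(\alpha-\mu)$ appearing in \eqref{eq:diff-Linf-gain-analysis-2}, and for this cancellation the factor $\lambda$ multiplying $M$ in \eqref{eq:diff-Linf-gain-analysis-2} is essential, since it pairs with the $1/\lambda$ coming from integrating the dissipation inequality. A secondary subtlety is that the claim must hold for every horizon $T>0$ and only for disturbances in $\mathcal{L}_{\infty e}$, so all signal norms should be kept as truncated norms throughout and the supremum taken only at the end; no boundedness of $\delta_w$ over $[0,\infty)$ is invoked.
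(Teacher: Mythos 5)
Your proposal is correct and follows essentially the same route as the paper: the storage function $V=\delta_x^\top M(x)\delta_x$, the congruence of \eqref{eq:diff-Linf-gain-analysis-1} giving the dissipation inequality, the Schur complement of \eqref{eq:diff-Linf-gain-analysis-2} giving the pointwise output bound, and $\beta=\alpha\lambda V$. The only (immaterial) difference is that you bound $V(t)$ by the integrating-factor/comparison estimate, whereas the paper uses the equivalent level-set argument that $\dot V<0$ whenever $V>\tfrac{\mu}{\lambda}\norm{\delta_w}^2$; both yield the same final inequality.
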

			   
\begin{proof}
Define $V(x,\delta_x)= \delta_x^\top M(x) \delta_x$. Applying the Schur complement to \eqref{eq:diff-Linf-gain-analysis-2} leads to 
\[\left[ {\begin{array}{*{20}{c}}
  {\lambda M}&0 \\ 
  0&{(\alpha  - \mu )I_p} 
\end{array}} \right] - {\alpha ^{ - 1}}\left[ {\begin{array}{*{20}{c}}
  {{\mathcal{C}^ \top }} \\ 
  {{\mathcal{D}^ \top }} 
\end{array}} \right]\left[ {\begin{array}{*{20}{c}}
  \mathcal{C}&\mathcal{D} 
\end{array}} \right] \geq 0.\]
Multiplying the preceding inequality by $[\delta_x^\top,\delta_w^\top]^\top$ and its transpose from the left and right, respectively, gives  $\lambda \delta _x^ \top M{\delta _x} + (\alpha  - \mu )\delta _w^ \top {\delta _w} - {\alpha ^{ - 1}}\delta _z^ \top {\delta _z} \geq0$, which implies 
\begin{equation}\label{eq:dz_V_dw}
   \delta _z^ \top {\delta _z}\leq \alpha\left(\lambda V(x,\delta _x) + (\alpha  - \mu )\delta _w^ \top {\delta _w}\right). 
\end{equation}

Multiplying  \eqref{eq:diff-Linf-gain-analysis-1} by $[\delta_x^\top,\delta_w^\top]^\top$ and its transpose from the left and right leads to $\dot V  + \lambda V- \mu \delta_w^\top \delta_w \leq 0$, which further implies $\dot V(x(t),\delta_x(t)) <0$ whenever $V(x(t),\delta_x(t)) > \frac{\mu}{\lambda} \delta_w(t)^\top \delta_w(t)$ for any $t\geq 0$. Therefore, for any $t \in[0, T]$,  $V(x(t),\delta_x(t))\leq \max\{\frac{\mu}{\lambda} \linfnormtrucI{\delta_w}{[0,T]}^2, V(x(0),\delta_x(0))\}\leq \frac{\mu}{\lambda} \linfnormtrucI{\delta_w}{[0,T]}^2 + V(x(0),\delta_x(0))$. Plugging the preceding inequality into \eqref{eq:dz_V_dw}, we obtain that for any $t \in[0, T]$, we have
\begin{equation}\label{eq:dz_V_dw2}
   \norm{\delta _z(t)}^2\leq \alpha^2 \linfnormtrucI{\delta_w}{[0,T]}^2 + \alpha \lambda V(x(0),\delta _x(0)),
\end{equation}
which is equivalent to \eqref{eq:diff-Linf-gain} with the definition of $\beta(x,\delta_x) \triangleq \alpha \lambda V(x,\delta _x)$. The proof is complete. \qed
\end{proof}
			 
\begin{remark}
In case the metric $M(x)$ depends on $x_i$, an element of $x$, whose derivative is dependent on the input $u$ (or $w$), $\dot M$ and thus the condition \eqref{eq:diff-Linf-gain-analysis-1} will depend on $u$ (or $w$). In this case, a bound on $u$ (or $w$) needs to be known in order to verify the conditions \cref{eq:diff-Linf-gain-analysis-1,eq:diff-Linf-gain-analysis-2}.
\end{remark}

We term the metric $V(x,\delta_x) =\delta_x^\top M(x) \delta_x$ as a {\it robust CCM} (RCCM). 
Given a closed-loop system, \cref{lem:linf-gain-analysis} provides conditions to check whether a constant is a differential \Linf-gain bound of the system. We next address the problem of how to design a controller to achieve a desired universal \Linf-gain bound given an open-loop plant \cref{eq:dynamics}.

{\bf Control law construction}: Similar to \cite{manchester2018rccm}, we use $M(x)$ as a Riemannian metric to choose the path of minimum energy joining $x$ and $x^\star$ and construct the control law at any time $t$:
\begin{subequations}\label{eq:RCCM-controller}
\begin{align}
     \gamma (t) & = {\arg _{c \in \Gamma (x(t),{x^ \star }(t))}}\min E(c) \hfill \label{eq:RCCM-controller-geodesic} \\
     {\text{ }}u(t){\text{ }} &=  {\text{ }}{u^ \star }(t) + \int_0^s {K\left( {\gamma(t,s)} \right)\frac{{\partial \gamma(t,s)}}{{\partial s}}ds}, \label{eq:RCCM-controller-u-signal}
\end{align}
\end{subequations}
where $
E(c)\trieq \int_0^1 c_s^\top M(c(s))c_s(s)ds
$ and the matrix function $K(\cdot )$ will be introduced later in \cref{eq:K-Y-W-relation}. Following \cite{manchester2017control}, we make the following assumption to simplify the subsequent analysis. 
\begin{assumption}\label{assump:cut-locus}
For the control system \eqref{eq:dynamics}, \eqref{eq:RCCM-controller}, the set of times $t\in[0,\infty)$ for which $x(t)$ is in the cut locus of $x^\star(t)$ has zero measure.
\end{assumption}
Without this assumption, the main results (Theorem~\ref{them:synthesis-universal-Linf-gain}) still hold if the derivative of the \rieman~energy, $E(x,\xstar)$, used in proof of Theorem~\ref{them:synthesis-universal-Linf-gain}, is replaced with its upper Dini derivative, as done in~\cite{singh2019robust}.  The main theoretical results for synthesizing a controller using RCCM to guarantee a universal \Linf-gain bound can now be presented.  
\begin{theorem}\label{them:synthesis-universal-Linf-gain}
For the plant \eqref{eq:dynamics} with differential dynamics \eqref{eq:diff_dynamics}, suppose there exists a uniformly-bounded metric $W(x)>0$,  
a matrix function $Y(x)$, and positive constants $\lambda,\ \mu$ and $\alpha$ such that
\begin{align}
   \left[ {\begin{array}{*{20}{c}}
  {\left\langle {AW + BY} \right\rangle  - \dot W + \lambda W}&B_w \label{eq:diff-Linf-synthesis-1}\\ 
  B_w^\top &{ - \mu I_p} 
\end{array}} \right] & \leq 0, \\
\left[ {\begin{array}{*{20}{c}}
  {\lambda W}& \star & \star  \\ 
  0&{(\alpha  - \mu )I_p}& \star  \\ 
  {CW + DY}&0&{\alpha I_q} 
\end{array}} \right] & \geq 0, \label{eq:diff-Linf-synthesis-2}\end{align}
for all $x,u,w$, where $\dot W \trieq \sum_{i=1}^n \frac{\partial W}{\partial x_i}\dot x_i$. Then for any target trajectory $u^\star,x^\star,w^\star$ satisfying \eqref{eq:dynamics-nom}, if Assumption~\ref{assump:cut-locus} holds, the RCCM controller \cref{eq:RCCM-controller} with 
\begin{equation}\label{eq:K-Y-W-relation}
    K(x) = Y(x)W^{-1}(x),
\end{equation}
achieves a universal $\mathcal{L}_\infty$-gain bound of $\alpha$ for the closed-loop system. 
\end{theorem}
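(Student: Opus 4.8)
The plan is to convert the synthesis conditions \eqref{eq:diff-Linf-synthesis-1}--\eqref{eq:diff-Linf-synthesis-2} into the analysis conditions \eqref{eq:diff-Linf-gain-analysis-1}--\eqref{eq:diff-Linf-gain-analysis-2} of \cref{lem:linf-gain-analysis} via the standard change of variables $M(x) = W(x)^{-1}$, $Y(x) = K(x)W(x)$, and then to integrate the resulting differential \Linf{} bound along the minimum-energy path $\gamma(t,\cdot)$ to obtain the \emph{universal} \Linf{} bound of \cref{defn:Linf-gain-universal}. First I would establish the congruence relation between $\dot W$ and $\dot M$: differentiating $W(x)M(x) = I$ along the closed-loop trajectory gives $\dot M = -M \dot W M$. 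Pre- and post-multiplying \eqref{eq:diff-Linf-synthesis-1} by $\mathrm{diag}(M, I_p)$ and using $\mcA = A + BK = (AW + BY)W^{-1}$, together with $\langle M(AW+BY)M\rangle = \langle M\mcA\rangle$ and $-M\dot W M = \dot M$, turns \eqref{eq:diff-Linf-synthesis-1} into exactly \eqref{eq:diff-Linf-gain-analysis-1} with $\mcB = B_w$. Similarly, pre- and post-multiplying \eqref{eq:diff-Linf-synthesis-2} by $\mathrm{diag}(M, I_p, I_q)$ and using $\mcC = C + DK = (CW+DY)W^{-1}$, $\mcD = 0$, converts it into \eqref{eq:diff-Linf-gain-analysis-2}. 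Hence by \cref{lem:linf-gain-analysis} the closed-loop system \eqref{eq:dynamics_cl} has a differential \Linf-gain bound of $\alpha$; that is, \eqref{eq:diff-Linf-gain} holds with $\beta(x,\delta_x) = \alpha\lambda\, \delta_x^\top M(x)\delta_x$.

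Next I would lift this differential (pointwise-in-tangent-space) bound to a bound on actual trajectory deviations. Fix a target trajectory $x^\star, u^\star, w^\star$ and an actual trajectory $x, u, w$ with $u$ given by \eqref{eq:RCCM-controller}. Let $\gamma(t,\cdot)\in\Gamma(x^\star(t), x(t))$ be the minimum-energy geodesic and define the Riemannian energy $\mcE(t) \trieq E(x^\star(t), x(t)) = \int_0^1 \gamma_s(t,s)^\top M(\gamma(t,s))\gamma_s(t,s)\,ds$. The key identification is that, because the controller \eqref{eq:RCCM-controller-u-signal} prescribes the differential input $\delta_u = K(\gamma)\gamma_s$ along the geodesic, the curve $s\mapsto \gamma(t,s)$ with tangent $\delta_x = \gamma_s$ and $\delta_w = w(t) - w^\star(t)$ (the disturbance deviation, taken constant along $s$, or handled via the variational closed-loop dynamics as in \cite{manchester2018rccm, singh2019robust}) is an admissible solution of the closed-loop differential dynamics \eqref{eq:diff_dynamics_cl}. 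Using \eqref{eq:diff-Linf-gain-analysis-1} integrated over $s\in[0,1]$ (invoking \cref{assump:cut-locus} so that $\dot{\mcE}$ exists a.e. and equals the integral of $\dot V$ along the geodesic, as in \cite{manchester2017control}), one gets $\dot{\mcE}(t) \le -\lambda \mcE(t) + \mu \|w(t)-w^\star(t)\|^2$, whence by comparison $\mcE(t) \le e^{-\lambda t}\mcE(0) + \frac{\mu}{\lambda}\linfnormtrucI{w-w^\star}{[0,T]}^2 \le \mcE(0) + \frac{\mu}{\lambda}\linfnormtrucI{w-w^\star}{[0,T]}^2$ for all $t\in[0,T]$. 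Integrating the Schur-complement consequence of \eqref{eq:diff-Linf-gain-analysis-2} — i.e. the inequality $\|\delta_z\|^2 \le \alpha\big(\lambda\,\delta_x^\top M \delta_x + (\alpha-\mu)\|\delta_w\|^2\big)$ — along the geodesic and using $\linfnormtrucI{z-z^\star}{[0,T]} \le \int_0^1 \|\delta_z(t,s)\|\,ds$ together with the Cauchy--Schwarz/Jensen inequality $\big(\int_0^1\|\delta_z\|ds\big)^2 \le \int_0^1\|\delta_z\|^2 ds$ yields $\|z(t)-z^\star(t)\|^2 \le \alpha^2 \linfnormtrucI{w-w^\star}{[0,T]}^2 + \alpha\lambda\,\mcE(0)$, which is \eqref{eq:Linf-gain-bound} with $\beta(x(0),x^\star(0)) = \alpha\lambda\, E(x^\star(0), x(0)) \ge 0$ and $\beta(x,x)=0$.

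The main obstacle I anticipate is the rigorous justification of the lifting step — namely, that the pointwise differential bound \eqref{eq:diff-Linf-gain} integrates cleanly along the geodesic to a bound on the finite path energy $\mcE(t)$, and that the specific open-loop-then-geodesic-integral controller \eqref{eq:RCCM-controller} indeed realizes the closed-loop differential dynamics \eqref{eq:diff_dynamics_cl} along $\gamma(t,\cdot)$ for \emph{every} admissible target trajectory (the ``universal'' quantifier). This requires care with: (i) the time-differentiation of the minimum-energy path under \cref{assump:cut-locus} (or upper Dini derivatives without it, as the remark after \cref{assump:cut-locus} notes and as in \cite{singh2019robust}); (ii) correctly accounting for how the disturbance deviation enters the variational dynamics along $s$ (the treatment of $\delta_w$ as the fixed endpoint-disturbance difference, following \cite{manchester2018rccm}); and (iii) the Jensen-type inequality relating the $\mathcal{L}_\infty$ norm of $z-z^\star$ to the integrated differential output energy. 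The algebraic congruence transformations in the first paragraph are routine; everything substantive is in making the geometric integration argument precise, for which I would closely follow the template of \cite{manchester2018rccm} for the \Ltwo{} case and \cite{manchester2017control, singh2019robust} for the Riemannian-energy bookkeeping.
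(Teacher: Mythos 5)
Your proposal follows essentially the same route as the paper's proof: a congruence transformation with $M=W^{-1}$, $Y=KW$ turns \cref{eq:diff-Linf-synthesis-1,eq:diff-Linf-synthesis-2} into \cref{eq:diff-Linf-gain-analysis-1,eq:diff-Linf-gain-analysis-2}, and the differential bound is then lifted by integrating along the minimum-energy path with the interpolated disturbance $w(t,s)=(1-s)w^\star(t)+sw(t)$, giving the energy decay inequality \cref{eq:dE-E-ws}, the Gr\"onwall-type bound \cref{eq:E-x-x*-ws-relation}, and the Cauchy--Schwarz step, with the same $\beta(x,x^\star)=\alpha\lambda E(x,x^\star)$. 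The lifting technicality you flag (justifying $\frac{d}{dt}E(x,x^\star)$ via \cref{assump:cut-locus}) is handled in the paper exactly as you anticipate, by freezing inputs on $[t_i,t_i+\epsilon)$, interchanging $\partial_t$ and $\partial_s$, and letting $\epsilon\rightarrow 0$.
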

\begin{proof}

Equations \eqref{eq:diff-Linf-synthesis-1} and \eqref{eq:diff-Linf-synthesis-2} can be transformed to \eqref{eq:diff-Linf-gain-analysis-1} and \eqref{eq:diff-Linf-gain-analysis-2}, respectively, by applying congruence transformation, defining $M(x)=W^{-1}(x)$, and leveraging \eqref{eq:K-Y-W-relation} and  \eqref{eq:diff_cl-matrices}. This is similar to the case of LMI-based state-feedback controller synthesis for LTI systems for \Linf~gain minimization \cite{Scherer97Multi}.

{At any time } $t = {t_i} \in [0,\infty ]$, consider the following smoothly parameterized paths of states, controls, disturbances, and outputs for $s \in [0,1]$:
\begin{equation}
    \begin{split}
  c(t,s) &= \gamma (t,s) \hfill \\
  v(t,s) &= {u^ \star }(t) + \int_0^s {K\left( {c(t,s)} \right)\frac{{\partial c(t,s)}}{{\partial s}}ds}  \hfill \\
  w(t,s) &= (1 - s){w^ \star }(t) + sw(t) \hfill \\
  \zeta (t,s) &= g\left( {c(t,s),v(t,s)} \right). \hfill
   \end{split}
\end{equation}
Differentiating these four paths with respect to $s$ at fixed time $t = {t_i}$ with subscript $s$ denoting $\frac{\partial}{\partial s}$ yields:
\begin{equation}\label{eq:path_diff}
    \begin{split}
  {c_s}(t,s) &= {\gamma _s}(t,s) \hfill \\
  {v_s}(t,s) &= K\left( {c(t,s)} \right){c_s}(t,s) \hfill \\
  {w_s}(t,s) &= w(t) - {w^ \star }(t) \hfill \\
  {\zeta _s}(t,s) &= C\left( {c(t,s),v(t,s)} \right){c_s}(t,s) \\
  & ~~~+ D\left( {c(t,s),v(t,s)} \right){v_s}(t,s). \hfill 
   \end{split}
\end{equation}
Now suppose that on some time interval $[{t_i},{t_i} + \epsilon )$ and for each $s \in [0,1]$, we fix the control and disturbance inputs to their values at $t = {t_i}$, and the state $c(t,s)$ evolves according to \eqref{eq:dynamics}. Here, the interval $[t_i,t_i+\epsilon]$ can be arbitrarily small to guarantee the existence of solutions.  
By changing the order of differentiation with respect to $t$ and $s$, we can show that \eqref{eq:path_diff} satisfies the closed-loop differential dynamics \eqref{eq:diff_dynamics_cl} with $\delta_x= c_s,\ \delta_z=\zeta_s, \delta_w = w_s$:
\begin{equation}\label{eq:cs_dynamics}
\begin{split}
     \dot c_s & = \mcA c_s + \mcB w_s, \\
     \zeta_s & =  \mcC c_s + \mcD w_s.
\end{split}
\end{equation}
Note that 
\begin{align}
     \frac{d}{{dt}}(c_s^ \top M{c_s}) &= c_s^ \top \dot M{c_s} + \left\langle {c_s^ \top M{{\dot c}_s}}, \right\rangle  \hfill \nonumber\\
  & = c_s^ \top \dot M{c_s} + \left\langle {c_s^ \top M\left( {\mathcal{A}{c_s} + \mcB{w_s}} \right)} \right\rangle  \hfill \label{eq:d_csMcs_1}\\
  & = c_s^ \top \left( {\dot M + \left\langle {M\mathcal{A}} \right\rangle } \right){c_s} + 2c_s^ \top M \mcB {w_s} \hfill \nonumber \\
   & \leq  - \lambda c_s^ \top M{c_s} + \mu w_s^ \top {w_s}, \hfill \label{eq:dcsMcs_lambda_mu}
\end{align}
where \eqref{eq:d_csMcs_1} is due to \eqref{eq:cs_dynamics}, and \eqref{eq:dcsMcs_lambda_mu} can be obtained by multiplying  \eqref{eq:diff-Linf-gain-analysis-1} by $[c_s^\top, w_s^\top]^\top$ and its transpose from the left and right, respectively.  Integrating \eqref{eq:dcsMcs_lambda_mu} over $s\in[0,1]$ and leveraging $w_s(t,s) = w(t) - {w^ \star }(t)$
gives $\int_0^1 {\frac{d}{{dt}}(c_s^ \top M{c_s})} ds \leq   - \lambda \int_0^1 {c_s^ \top M{c_s}} ds + \mu \int_0^1 \norm{w(t) - {w^ \star }(t)}^2 ds.$
Interchanging the differentiation and integration,  we obtain $ \frac{d}{{dt}}E(c(t))  \leq  - \lambda E(c(t)) + \mu \int_0^1 \norm{w(t) - {w^ \star }(t)}^2  ds $, i.e., 
\begin{equation} \label{eq:dE_ws_relation}
\begin{split}
   \frac{d}{{dt}}E(c(t)) & \leq   - \lambda E(c(t)) + 
  \mu \norm{w(t) - {w^ \star }(t)}^2.  \hfill 
\end{split}
\end{equation}
For sufficiently small $\epsilon$, for any $t\in[t_i,t_i+\epsilon)$, equation \eqref{eq:dE_ws_relation} indicates 
\begin{align}
    E(c(t)) \leq & E(c({t_i})){e^{ - \lambda (t - {t_i})}} \nonumber\\
    & + \mu \int_{t_i}^t e^{-\lambda(t-\tau)}\norm{w(\tau) - {w^ \star }(\tau)}^2 d\tau.\label{eq:Ec_t_ti}
\end{align}
Since $E(x,x^\star)$ is the minimal energy of a path joining $x$ and $x^\star$, we have $E(x(t),x^\star(t))\leq E(c(t))$ for $t\in[t_i,t_i+\epsilon)$. Furthermore, by construction, $E(c(t_i)) = E(x(t_i),x^\star(t_i))$. Therefore, \eqref{eq:Ec_t_ti} implies that $
    E(x(t),{x^ \star }(t)) \leq  E(x({t_i}),{x^ \star }({t_i})){e^{ - \lambda (t - {t_i})}} + \mu \int_{t_i}^t e^{-\lambda(t-\tau)}\norm{w(\tau) - {w^ \star }(\tau)}^2 d\tau,$
for any $t\in[t_i,t_i+\epsilon)$.
Hence, taking $\epsilon\rightarrow 0$, and, since $t_i$ was arbitrary, we have for all $t\in[0,\infty)$
\begin{equation}\label{eq:dE-E-ws}
    \frac{d}{{dt}}E(x(t),{x^ \star }(t)) \leq - \lambda E(x(t),{x^ \star }(t)) + \mu \norm{w(t) - {w^ \star }(t)}^2.
\end{equation}
Integrating the above equation from $0$ to $t$ yields
\begin{equation}\label{eq:E-x-x*-ws-relation}
  \hspace{-2mm}  E(x(t),{x^ \star }(t)) \leq E(x(0),{x^ \star }(0)){e^{ - \lambda t}} + \frac{\mu }{\lambda }\linfnormtruc{w- {w^ \star }}{t}^2.
\end{equation}
Multiplying  \eqref{eq:diff-Linf-gain-analysis-2} by $[c_s^\top,w_s^\top]^\top$ and its transpose from the left and right, respectively, gives 
$\lambda c_s^ \top M{c_s} + {w_s}(\alpha  - \mu ){w_s} - {\alpha ^{ - 1}}\zeta _s^ \top {\zeta _s} \geq0,$ which is equivalent to 
$
  \alpha^{ - 1}\zeta _s^ \top {\zeta _s} \leq\lambda c_s^ \top M{c_s} +(\alpha  - \mu ){w_s^\top}{w_s}. 
$
Integrating this inequality gives
\begin{align}
\small
    \int_0^1 {\frac{1}{\alpha}{{\left\| {{\zeta _s(t,s)}} \right\|}^2}ds}  &\leq \lambda E(c(t)) + (\alpha  - \mu ) \int_0^1 {{{\left\| {{w_s}(t,s)} \right\|}^2}ds} \nonumber \\
    & \hspace{-15mm} =  \lambda E(c(t)) + (\alpha  - \mu )\norm{w(t) - {w^ \star }(t)}^2.\label{eq:zeta_s-ws}
\end{align}
Applying the Cauchy-Schwarz inequality, we have  $\int_0^1 {{{\left\| {{\zeta _s}(t,s)} \right\|}^2}ds}  \geq{\left\| {\int_0^1 {{\zeta _s}(t,s)ds} } \right\|^2} = {\left\| {z(t) - {z^ \star(t) }} \right\|^2}$, which, together with \eqref{eq:zeta_s-ws}, leads to 
\begin{equation}\label{eq:z-z*-w-w*}
 \frac{1}{\alpha}{\left\| {z(t) - {z^ \star(t) }} \right\|^2} \leq\lambda E(c(t)) + (\alpha  - \mu )\left\| {w(t) - {w^ \star(t) }} \right\|^2,  
\end{equation}
for any $t$. Note that the preceding inequality holds for any path $c(t)$ connecting $x(t)$ and $x^\star(t)$. If we choose the path with minimal energy, i.e. $\gamma(t)$, then \eqref{eq:z-z*-w-w*} becomes
\begin{align}
& \frac{1}{\alpha}{\left\| {z(t) - {z^ \star(t)}} \right\|^2}  \nonumber\\
& \leq \lambda E(x(t),x^\star(t)) \nonumber + (\alpha  - \mu )\left\| {w(t) - {w^ \star }(t)} \right\| ^2, \nonumber \\
& \leq \lambda E(x(t),x^\star(t)) \nonumber + (\alpha  - \mu )\linfnormtruc{w - w^\star}{t}^2.\label{eq:z-z*-w-w*2}
\end{align}
Plugging \eqref{eq:E-x-x*-ws-relation} into the above inequality yields  ${\left\| {z(t)\! -\! {z^ \star }\!(t)} \right\|^2}\! \leq\! {\alpha ^2}\!\left\| {w(t)\! -\! {w^ \star\!(t) }} \right\|^2
+  \alpha \lambda E(x(0),{x^ \star }(0))e^{ - \lambda t}$ for any $t$. Therefore, for any $T>0$,
\begin{align*}
 \linfnormtrucI{z - {z^ \star }}{[0,T]}^2 \leq  {\alpha ^2}  \linfnormtrucI{w - {w^ \star }}{[0,T]}^2 +  \beta(x(0),{x^ \star }(0)),
\end{align*}
where $\beta(x,x^\star) = \alpha \lambda E(x,{x^ \star })$. The proof is complete. \qed
			
\end{proof}
\begin{remark}
From the proof of \cref{them:synthesis-universal-Linf-gain}, one can see that $W(x)$ in \cref{eq:diff-Linf-synthesis-1,eq:diff-Linf-synthesis-2} is connected with $M(x)$ in \cref{eq:diff-Linf-gain-analysis-1,eq:diff-Linf-gain-analysis-2} by $M(x) = W^{-1}(x)$. This is similar to the LTI case where a matrix equal to the inverse of a Lyapunov matrix is introduced for state-feedback control design \cite{Scherer97Multi}. We term $W(x)$ as a {\it dual} RCCM.
\end{remark}

{\bf Removal of synthesis conditions' dependence on $u$}: Condition \cref{eq:diff-Linf-synthesis-1} may depend on $u$ and $w$ due to the presence of terms $A$ and $\dot W$. Dependence on $w$ is not a significant issue as a bound on $w$ can usually be pre-established and incorporated in solving the optimization problem involving \cref{eq:diff-Linf-synthesis-1}. Since a bound on $u$ is not easy to obtain (before a controller is synthesized), the dependence of \cref{eq:diff-Linf-synthesis-1} on $u$ is undesired. To remove the dependence on $u$, we need the following condition: 
\begin{itemize}[leftmargin=10mm]
    \item[\cI] For each $i=1,\dots,m$, $\partial_{b_i} W - \left\langle \frac{\partial b_i}{\partial x}W\right\rangle =0$.
\end{itemize}
Formally, condition \cI~states that $b_i$ is a Killing vector for the metric $W$ \cite[Section III.A]{manchester2017control}.  In particular, if $B$ is  in the form of $[0, I_{m_1}]^\top$, condition \cI~requires that $W$ must not depend on the last $m_1$ state variables.
\begin{remark}
Due to the product term $\lambda W$ in \eqref{eq:diff-Linf-synthesis-1}, conditions \cref{eq:diff-Linf-synthesis-1,eq:diff-Linf-synthesis-2}  are not convex. However, since $\lambda$ is a constant, one can perform a line or bi-section search for $\lambda$. In such case, verifying the conditions  \cref{eq:diff-Linf-synthesis-1,eq:diff-Linf-synthesis-2} becomes a state-dependent LMI problem, which can be solved by gridding of the state space or using sum of square (SOS) techniques (see \cite{singh2019robust} for details).
\end{remark}

\subsection{Offline search of RCCM for \Linf~gain minimization}\label{sec:sub-opt-rccm}
The constant $\alpha$, which is an upper bound on the universal \Linf~gain, appears linearly in the condition \eqref{eq:diff-Linf-synthesis-2} of \cref{them:synthesis-universal-Linf-gain}. Therefore, one can minimize $\alpha$ when searching for $W(x)$ and $Y(x)$. To make the optimization problem feasible, one often needs to limit the states to a compact set, i.e., considering $x\in \mcX$, where $\mathcal{X}$ is a compact set. Additionally, since calculating the inverse of $W(x)$ is needed for constructing the control law due to $M(x) = W^{-1}(x)$ (detailed in \cref{sec:sub-control-law}), one may also want to enforce a lower bound, $\underline{\beta}$, on the eigenvalues of $W(x)$. Therefore, in practice, one could solve the optimization problem $\opt_{RCCM}$:
\begin{subequations}\label{eq:opt_rccm}
\begin{align}
\opt_{RCCM}: \hspace{1cm}  &\min_{W,Y,\lambda>0,\mu>0} \alpha \\
    \textup{subject to  } & \textup{Conditions } \cref{eq:diff-Linf-synthesis-1} \textup{ and } \cref{eq:diff-Linf-synthesis-2},\\
    & W(x) \geq \underline{\beta} I_n, \\
    & x\in \mcX. 
\end{align}
\end{subequations}
Note that $\opt_{RCCM}$ just needs to be solved once offline. 

\subsection{Offline optimization for refining state and input tubes}\label{sec:sub-opt_refine_u_x_tubes}
In formulating the optimization problem \optrccm~to search for $W(x)$ and $Y(x)$, the $z$ vector often contains weighted states and inputs to balance the tracking performance and control efforts. For instance, we could have $z = [(Qx)^\top,  (Ru)^\top]^\top$, where $Q$ and $R$ are some weighting matrices. After obtaining $W(x)$ and $Y(x)$, one can always derive {\it refined} \Linf-gain bounds for some {\it specific} state and input variables, $\hat z\in\mbR^l$, by re-deriving the $C$ and $D$ matrices in \eqref{eq:diff_dynamics} for $\hat z = \hat g(x,u) $, and then solving the optimization problem $\opt_{REF}$: 
\begin{subequations}\label{eq:opt_ref}
\begin{align}
\opt_{REF}: \hspace{1cm}  &\min_{\lambda>0,\mu>0} \alpha \\
    \textup{subject to  } & \textup{Conditions } \cref{eq:diff-Linf-synthesis-1} \textup{ and } \cref{eq:diff-Linf-synthesis-2},\\
    & x\in \mcX. 
\end{align}
\end{subequations}
For instance, by solving \optref, we get an \Linf-gain bound for the deviation of some states (i.e., $\linfnorm{x_\mbI-\xstar_\mbI}$, where $\mbI$ is the index set) with $\hat z = x_\mbI$, and a \Linf-gain bound for the deviation of all inputs (i.e., $\linfnorm{u-\ustar}$) with $\hat z = u$. With an \Linf-gain bound $\alpha$ (from solving \optref) and a bound $\bar w$ on the disturbances, i.e., $\linfnorm{w-\wstar} \leq \bar w$, the actual variable $\hat z$ is guaranteed to stay in a tube around the nominal variable $\hat z^\star $, i.e.,
\begin{equation}\label{eq:tube_zhat}
    \hat z \in \Omega(\hat z^\star)\trieq \{ y \in \mbR^l: \norm{y-\hat z^\star}\leq \alpha \bar w \}.
\end{equation}
Following this idea, we can easily get the tube for all or part of the states or inputs.  
\begin{remark}
The tubes obtained through \eqref{eq:tube_zhat} hold for {\it any} trajectories that satisfy the nominal dynamics \eqref{eq:dynamics-nom}, and are particularly suitable to be incorporated into {\it online} planning and predictive control schemes, e.g., tube MPC.   
\end{remark}
\subsection{Online computation of the control law}\label{sec:sub-control-law}
{\bf Geodesic computation}: Similar to other CCM or RCCM based control \cite{manchester2017control,manchester2018rccm,singh2019robust}, 
the most computationally expensive part of the proposed control law \cref{eq:RCCM-controller} lies in online computation of the geodesic $\gamma(t)$ according to \eqref{eq:RCCM-controller-geodesic} at each time instant $t$, which necessitates solving a nonlinear programming (NLP) problem. However, since the NLP problem does not involve dynamic constraints, it is much easier to solve than a nonlinear MPC problem. Following \cite{leung2017pseudospectral-geodesic}, such a problem can be efficiently solved by applying a   pseudospectral method, i.e., by discretizing the interval $[0,1]$ using the Chebyshev-Gauss-Lobatto nodes and using Chebyshev interpolating polynomials up to degree $D$ to approximate the solution. The integral in \cref{eq:RCCM-controller-geodesic} is approximated using the Clenshaw-Curtis quadrature scheme with $N>D$ nodes.

{\bf Control signal computation}: Given the solution to the geodesic problem \eqref{eq:RCCM-controller-geodesic} parameterized by a set of values $\{\gamma(s_k)\}_{k=0}^N$ and $\{\gamma_s(s_k)\}_{k=0}^N$, $s_k\in[0,1]$, the control signal can be computed according to \eqref{eq:RCCM-controller-u-signal} with the integral again approximated by the  the Clenshaw-Curtis quadrature scheme.

The control law in \eqref{eq:RCCM-controller-u-signal} is just one way to construct a control signal achieving the universal \Linf-gain bound, but it is not the only one and others may be preferable. We now show how to construct a set of robustly stabilizing controls, following \cite{manchester2018rccm}.

From the formula for first variation of energy \cite{do2013riemannian}, we have  for the derivative of energy functional at any point $x$ that is not on the cut locus of $x^\star$:
\begin{align}
  \frac{1}{2}\frac{d}{dt} E(x,x^\star) = & \gamma_s^\top(1)M(x)(f(x)+B(x)u+B_w(x)w) \nonumber \\
  & -\gamma_s^\top(0)M(\xstar)\dot{x}^\star.   \label{eq:Edot-u-relation}
\end{align}
From the proof of \cref{them:synthesis-universal-Linf-gain}, one can see that the control law in \eqref{eq:RCCM-controller} essentially tries to ensure  \eqref{eq:dE-E-ws}. Obviously, the set of control inputs for which \eqref{eq:dE-E-ws} holds is non-empty, i.e., we have 
\begin{align}
  \min_u &\max_w  \left\{ 2\gamma_s^\top(1)M(f+Bu+B_w w)-2\gamma_s^\top(0)M(\xstar)\dot{x}^\star \right. \hfill \nonumber  \\ 
& + \lambda E(x(t),{x^ \star }(t)) 
    - \mu \norm{w(t) - {w^ \star }(t)}^2\Large{\}} \leq 0,
\end{align}
where the dependence of $M$, $f$, $B$ and $B_w$ on $x$ has been omitted. The worst-case $w$ for such case is independent of $u$: $\hat w = \wstar + \frac{1}{\mu} B_w^\top(x) M(x)\gamma_s(1)$. So, for each state $x$ we could construct a set of control inputs:
\begin{align}
\hspace{-1mm}    \mathcal{U}&\trieq  \left\{u\in \mbR^m: 2\gamma_s^\top(1)M(f+Bu+B_w \hat w)-2\gamma_s^\top(0)\right. \nonumber  \\ 
& \cdot M(\xstar)\dot{x}^\star  + \lambda E(x(t),{x^\star }(t)) 
    - \frac{1}{\mu}\norm{ B_w^\top M\gamma_s(1)}^2\Large{\}} \leq 0. \nonumber
\end{align}
It should be noted that when using the preceding equation to construct the control inputs, one cannot solve \optref to compute the \Linf-gain bound if the output variables,~$z$, depend on some control inputs.   						  
\section{Application to Feedback Motion Planning}\label{sec:app-mp}
Thanks to the certificate tubes in \cref{eq:tube_zhat}, the RCCM controller presented in \cref{sec:RCCM} can be conveniently incorporated as a low-level tracking or ancillary controller into a feedback motion planning or nonlinear tube MPC framework. We demonstrate an application to the former in this section. The core idea is to compute nominal motion plans $(\xstar, \ustar)$ using the nominal dynamics \eqref{eq:dynamics-nom} and  {\it tightened} constraints.  
Denote the tubes for $x-\xstar$ and $u-\ustar$ obtained through solving \optref~in \cref{sec:sub-opt_refine_u_x_tubes} as $\tilde \Omega_x \trieq \{\tilde x\in \mbR^n: \norm{\tilde x} \leq \alpha_x \bar w\}$ and $\tilde \Omega_u \trieq \{\tilde u\in \mbR^m: \norm{\tilde u} \leq \alpha_u \bar w \}$, where $\alpha_x$ and $\alpha_u$ are the universal \Linf-gain bounds for the states and control inputs, respectively, and $\bar w$ is bound on the disturbances, i.e., $\linfnorm{w-\wstar} \leq \bar w$. Then, the tightened constraints are given by 
\begin{subequations}\label{eq:tight_csts}
\begin{align}
    \xstar(\cdot) \in \bar \mcX \trieq \mcx \ominus \tilde \Omega_x  \label{eq:tight_csts_x},\\
     \ustar(\cdot) \in \bar \mcU \trieq \mcU \ominus \tilde \Omega_u \label{eq:tight_csts_u},
\end{align}
\end{subequations}
where $\mcU$ represents the control constraints, and $\ominus$ denotes the Minkowski set difference. 
One can simply use the tightened constraints in \eqref{eq:tight_csts} and the nominal dynamics \eqref{eq:dynamics-nom} to plan a target trajectory. Then, with the proposed RCCM controller, the actual states and inputs are guaranteed to stay in $\mcX$ and $\mcU$, respectively, in the presence of disturbances bounded by $\bar w$.
\begin{remark}
Dependent on the tasks, one may want to focus on some particular states when designing the RCCM controller through solving \optrccm. For instance, for motion planning with obstacle-avoidance requirements, one may want to focus on minimizing the tube size for position states. This often leads to tight tubes for position states, enabling planning more aggressive yet safe motions, as demonstrated in \cref{sec:simulations}.
\end{remark}

\section{Comparisons with
an Existing CCM-based Approach}\label{sec:compare-with-ccm}
In \cite{singh2019robust}, for the same system \eqref{eq:dynamics} considered here, the authors designed a tracking controller based on CCM {\it without considering disturbances} and then derived a {\it tube} where the actual states are guaranteed to stay {\it in the presence of disturbances} using input-to-state stability (ISS) analysis. In comparison, our method {\it explicitly incorporates disturbance rejection property} in designing the RCCM controller and produces tubes for both states and inputs {\it together} with the controller (if we include the tube refining process in \cref{sec:sub-opt_refine_u_x_tubes} as a part of the controller design process). In this section, under mild assumptions, we will prove that the tube yielded by our method is {\it tighter} than that from applying the idea of  \cite{singh2019robust}. To be consistent with the problem setting in \cite{singh2019robust}, for this section, we set $w^\star \equiv 0$ in defining the nominal  (i.e., un-disturbed) system \eqref{eq:dynamics-nom}, which leads to the nominal dynamics:  
\begin{equation}\label{eq:dynamics-nom-wstar=0}
   \dotxnom = f(\xnom)+B(\xnom) \ustar.
\end{equation}
The main technical ideas from \cite{singh2019robust} (mainly related to Theorem 3.5, Lemma~3.7 and Section~4.2 of \cite{singh2019robust}) can be summarized as: (1) searching a (dual) CCM metric, $\hat W$, for the nominal system \eqref{eq:dynamics-nom-wstar=0}, which yields a nonlinear controller guaranteeing the incremental stability of the nominal close-loop system; (2) deriving a tube to quantify the actual state in the presence of disturbances, i.e., subject to the dynamics \eqref{eq:dynamics}, based on ISS analysis. 

Unlike our approach, in \cite{singh2019robust} the search of the CCM metric is {\it not} jointly done with search of a matrix function (i.e., $Y(x)$ in Theorem~\ref{them:synthesis-universal-Linf-gain}, used to construct a differential feedback controller). Instead, \cite{singh2019robust} uses a min-norm type control law computed  using only the CCM metric. To facilitate a rigorous comparison, we slightly modify the condition for the CCM metric search to include another matrix function (analogous to $Y(x)$ in Theorem~\ref{them:synthesis-universal-Linf-gain}). Indeed, a joint search of a CCM metric $W(x)$ and a matrix function $Y(x)$  is adopted in \cite{manchester2017control}, which \cite{singh2019robust} builds upon. Such modification only influences the control signal determination, and does {\it not} change the {\it essential ideas} of \cite{singh2019robust}. With such modifications, the main results of  \cite{singh2019robust} can be summarized in the following lemma using the notations of this paper.
\begin{lemma}\label{lemma:ccm-tube}(\hspace{-0.1mm}\cite{singh2019robust})
For the nominal system \cref{eq:dynamics-nom-wstar=0}, assume there exists a metric $\hat W(\xnom)$, a matrix function $\hat Y(\xnom)$ and a constant $\hat \lambda>0$ satisfying
\begin{equation}\label{eq:synthesis-condition-ccm}
    -\dot{\hat W} + \langle \hat A \hat W+B \hat Y \rangle + 2\hat\lambda \hat W \leq  0,
\end{equation}
where $\hat A\trieq \frac{\partial f}{\partial \xstar} + \sum\limits_{i = 1}^m {\frac{{\partial {b_i}}}{{\partial x}}} {\ustar_i}$, $\dot {\hat W} = \sum _{i=1}^{n} \frac{\partial \hat W(\xnom)}{\partial \xnom_i} \dotxnom_i$. 
Furthermore, $\hat W(\xnom)$ is uniformly bounded, i.e. $\ubar{\beta } I_n \leq\hat W(\xnom) \leq\bar \beta I_n$ with $\bar \beta \geq \ubar \beta>0$, for all $\xnom\in \mcX$. Then, for the perturbed system \cref{eq:dynamics} under the controller \eqref{eq:RCCM-controller}  with $W =\hat W$ and $Y = \hat Y$, if $x(0)=x^\star(0)$, we have
\begin{equation}\label{eq:Linf-gain-inequ-ccm}
    \linfnormtrucI{x - x^\star}{[0,T]}^2  \leq\hat\alpha^2 \linfnormtrucI{w}{[0,T]}^2, 
\end{equation}
where 
 \begin{equation}\label{eq:alpha-hat-defn}
 \hat \alpha  \triangleq \frac{1}{{\hat \lambda }}\sqrt {{\bar \beta }/{\ubar \beta }} \sup_{x \in \mcX} \bar \sigma ({B_w}(x)),
 \end{equation}
 with $\bar\sigma(\cdot)$ denoting the largest singular value. 
\end{lemma}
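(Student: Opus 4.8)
The plan is to reproduce the contraction-plus-ISS argument of \cite{singh2019robust}, reusing the machinery already set up for \cref{them:synthesis-universal-Linf-gain}. Write $M(\xnom)\triangleq \hat W^{-1}(\xnom)$ and $\hat K\triangleq \hat Y\hat W^{-1}$. First I would transform the synthesis inequality \eqref{eq:synthesis-condition-ccm}: pre- and post-multiplying by $M$ and using $\frac{d}{dt}(M\hat W)=0$ to get $M\dot{\hat W}M=-\dot M$ together with $\hat Y M=\hat K$ yields the pointwise contraction
\[
\langle M(\hat A+B\hat K)\rangle + \dot M + 2\hat\lambda M \leq 0 ,
\]
where $\dot M$ is taken along the nominal dynamics \eqref{eq:dynamics-nom-wstar=0}. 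This is exactly the $\mu=0$, $\lambda=2\hat\lambda$ specialization of \eqref{eq:diff-Linf-gain-analysis-1} with $\mcB=0$, and the transformation mirrors the one in the proof of \cref{them:synthesis-universal-Linf-gain}. As there, I would take \eqref{eq:synthesis-condition-ccm} to hold for all relevant $x$ (and $u$, via a Killing-vector condition on $B$) so that the estimate is valid along the actual perturbed trajectory.

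Next, fix a time $t$ at which $x(t)$ is not on the cut locus of $\xstar(t)$ --- almost every $t$ by \cref{assump:cut-locus} --- and let $\gamma(t,\cdot)$ be the minimal-energy geodesic joining $\xstar(t)$ and $x(t)$ used in \eqref{eq:RCCM-controller}. Apply the first-variation-of-energy identity \eqref{eq:Edot-u-relation} with $\dot x=f(x)+B(x)u+B_w(x)w$ from the perturbed plant \eqref{eq:dynamics} and split off the disturbance:
\[
\tfrac12\tfrac{d}{dt}E(x,\xstar) = \underbrace{\gamma_s^\top(1)M(x)\big(f(x)+B(x)u\big)-\gamma_s^\top(0)M(\xstar)\dotxnom}_{R_{\mathrm{nom}}} + \gamma_s^\top(1)M(x)B_w(x)w .
\]
The term $R_{\mathrm{nom}}$ is precisely the first variation of the disturbance-free closed-loop energy flow; since the control law \eqref{eq:RCCM-controller-u-signal} integrates the differential feedback $v_s=\hat K(c)c_s$ along $\gamma$, integrating the pointwise contraction $\frac{d}{dt}(c_s^\top M c_s)\leq -2\hat\lambda\, c_s^\top M c_s$ over $s\in[0,1]$ (as in \eqref{eq:d_csMcs_1}--\eqref{eq:dcsMcs_lambda_mu} with $\mu=0$) gives $R_{\mathrm{nom}}\leq -\hat\lambda E(x,\xstar)$. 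For the disturbance term, Cauchy--Schwarz in the $M(x)$ inner product together with the fact that a constant-speed geodesic has $\gamma_s^\top(1)M(x)\gamma_s(1)=E(x,\xstar)$ and with $M(x)=\hat W^{-1}(x)\leq \ubar\beta^{-1}I_n$ gives $\gamma_s^\top(1)M(x)B_w(x)w \leq \ubar\beta^{-1/2}\sqrt{E(x,\xstar)}\,\bar\sigma(B_w(x))\norm{w}$.

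Combining the two bounds and dividing by $2\sqrt{E(x,\xstar)}$ --- interpreting the left-hand side as an upper Dini derivative where $E=0$, which is where \cref{assump:cut-locus} and the following remark are invoked --- yields $\frac{d}{dt}\sqrt{E(x,\xstar)}\leq -\hat\lambda\sqrt{E(x,\xstar)}+\ubar\beta^{-1/2}\bar\sigma(B_w(x))\norm{w}$ for almost every $t$. Since $x(0)=\xstar(0)$ forces $E(x(0),\xstar(0))=0$, the comparison lemma gives $\sqrt{E(x(t),\xstar(t))}\leq \ubar\beta^{-1/2}\sup_{x\in\mcX}\bar\sigma(B_w(x))\int_0^t e^{-\hat\lambda(t-\tau)}\norm{w(\tau)}d\tau \leq \hat\lambda^{-1}\ubar\beta^{-1/2}\sup_{x\in\mcX}\bar\sigma(B_w(x))\linfnormtruc{w}{t}$. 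Finally, $\hat W(\xnom)\leq\bar\beta I_n$ means $M\geq\bar\beta^{-1}I_n$, so the Euclidean distance is dominated by the Riemannian one, $\norm{x-\xstar}\leq\sqrt{\bar\beta}\sqrt{E(x,\xstar)}$ (Euclidean path length lower-bounds the straight-line distance). Chaining the inequalities gives $\norm{x(t)-\xstar(t)}\leq \hat\lambda^{-1}\sqrt{\bar\beta/\ubar\beta}\sup_{x\in\mcX}\bar\sigma(B_w(x))\linfnormtruc{w}{t}=\hat\alpha\linfnormtruc{w}{t}$ with $\hat\alpha$ as in \eqref{eq:alpha-hat-defn}; taking the supremum over $t\in[0,T]$ (and squaring) yields \eqref{eq:Linf-gain-inequ-ccm}.

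I expect the main obstacle to be making the bound $R_{\mathrm{nom}}\leq-\hat\lambda E(x,\xstar)$ fully rigorous while the actual state evolves under the \emph{perturbed} dynamics: as in the proof of \cref{them:synthesis-universal-Linf-gain}, one has to introduce a smoothly parameterized family of auxiliary trajectories on an arbitrarily short interval $[t_i,t_i+\epsilon)$ (with inputs frozen to their values at $t_i$), argue that its $s$-derivative obeys the closed-loop differential dynamics so that the pointwise contraction integrates along the geodesic, and only then pass to the limit $\epsilon\to 0$; the attendant care with the cut locus (\cref{assump:cut-locus} / upper Dini derivatives) and with the $u$-dependence of \eqref{eq:synthesis-condition-ccm} are the places where the hypotheses must be used cleanly.
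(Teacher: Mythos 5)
The paper gives no proof of this lemma---it is imported verbatim from \cite{singh2019robust}---so there is no in-paper argument to compare against; your proposal reproduces the standard contraction-plus-ISS derivation of that reference (dualize \eqref{eq:synthesis-condition-ccm} to a contraction condition on $M=\hat W^{-1}$, bound the first variation of the Riemannian energy with the disturbance entering only through the endpoint term, apply the comparison lemma to $\sqrt{E}$, and convert Riemannian to Euclidean distance via the metric bounds), it reuses the auxiliary-path machinery of \cref{them:synthesis-universal-Linf-gain} exactly where rigor is needed, and the constants chain to precisely $\hat\alpha$ in \eqref{eq:alpha-hat-defn}. One minor note: your derivation yields $\hat\alpha^2\linfnormtrucI{w}{[0,T]}^2$ on the right-hand side, which is the intended meaning of \eqref{eq:Linf-gain-inequ-ccm} (the missing square there is a typo in the statement), and your handling of the cut locus / $E=0$ via Dini derivatives and of $u$-independence via the Killing condition on $B$ is consistent with the hypotheses used in \cite{singh2019robust}.
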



We also need the following assumption. 
\begin{assumption}\label{assump:What-u-w-independent}
The metric $\hat W$ in \cref{eq:synthesis-condition-ccm} satisfies both of the following conditions:
\begin{itemize}[leftmargin = 10 mm]
\item[\cII] For each $i=1,\dots,m$, $\partial_{b_i} \hat W - \left\langle \frac{\partial b_i}{\partial x}\hat W\right\rangle =0$.
\item[\cIII] For each $i=1,\dots,p$, $\partial_{b_{w,i}}\hat W - \left\langle \frac{\partial b_{w,i}}{\partial x}\hat W\right\rangle =0$.
\end{itemize}
\end{assumption}
Condition \cII~is similar to condition \cI, and is also imposed in \cite{singh2019robust} to simplify the verification of \cref{eq:synthesis-condition-ccm} and get a controller with a simple differential feedback form (see \cite[III.A]{manchester2017control}). Condition \cIII~states that each $b_{w,i}$ forms a Killing vector for $\hat W$, which essentially ensures that the condition \eqref{eq:synthesis-condition-ccm}, evaluated using the perturbed dynamics (i.e., replacing $\hat A$ in \cref{eq:synthesis-condition-ccm} with $A$ below \eqref{eq:diff_dynamics}), does {\it not} depend on $w$. Now we are ready to build a connection between the CCM-based approach in \cite{singh2019robust} and our approach. 

\begin{lemma}\label{lemma:nominal-ineq-imply-PPG-ineq}
Assume there exists a metric $\hat W(x)$, a matrix function $\hat Y(x)$, and a constant $\hat \lambda>0$ satisfying \eqref{eq:synthesis-condition-ccm} and \cref{assump:What-u-w-independent}. Then, \cref{eq:diff-Linf-synthesis-1,eq:diff-Linf-synthesis-2} with $C=I_n$ and $D=0$ (corresponding to $g(x,u) = x$) can be satisfied with 
\begin{equation}\label{eq:W-What-lambda-lambdahat-csts}
 \hspace{-2mm} W(x) = a\hat W(x),\ Y(x) = a\hat Y(x),\  \lambda=\hat \lambda,\ \alpha =\mu = \hat \alpha,
\end{equation}
where $a \!\trieq\! {\sup_{x \in \mcX} \bar \sigma ({B_w}(x))}/\!{\sqrt{\bar \beta \ubar \beta}}$, and $\hat \alpha$ is defined in \eqref{eq:alpha-hat-defn}.
\end{lemma}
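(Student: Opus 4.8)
The plan is to substitute the candidate data \eqref{eq:W-What-lambda-lambdahat-csts} into the synthesis inequalities \eqref{eq:diff-Linf-synthesis-1}--\eqref{eq:diff-Linf-synthesis-2} (with $C=I_n$, $D=0$, so $q=n$) and verify both, using the CCM inequality \eqref{eq:synthesis-condition-ccm} and the Killing-vector conditions \cII, \cIII of \cref{assump:What-u-w-independent}. Throughout, write $\sigma_w\trieq\sup_{x\in\mcX}\bar\sigma(B_w(x))$ (assumed positive, otherwise there is no disturbance and the comparison is vacuous), so that $a=\sigma_w/\sqrt{\bar\beta\ubar\beta}$ and, recalling \eqref{eq:alpha-hat-defn}, $\hat\alpha=(\sigma_w/\hat\lambda)\sqrt{\bar\beta/\ubar\beta}$. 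With this choice, $W=a\hat W$ is uniformly bounded, $a\ubar\beta I_n\le W\le a\bar\beta I_n$, and $\lambda=\hat\lambda>0$, $\mu=\alpha=\hat\alpha>0$, so the hypotheses of \cref{them:synthesis-universal-Linf-gain} on the data are met.

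The first and central step is to rewrite the $(1,1)$ block of \eqref{eq:diff-Linf-synthesis-1} as a scaled copy of the left side of \eqref{eq:synthesis-condition-ccm}. Expanding $\dot W=\sum_i\frac{\partial W}{\partial x_i}\dot x_i$ along the closed-loop perturbed dynamics \eqref{eq:dynamics_cl} and using $u=k+\ustar$ gives $\dot W=a(\partial_f\hat W+\sum_{i=1}^m u_i\,\partial_{b_i}\hat W+\sum_{i=1}^p w_i\,\partial_{b_{w,i}}\hat W)$, while $\lrangle{AW}=a(\lrangle{\frac{\partial f}{\partial x}\hat W}+\sum_{i=1}^m u_i\lrangle{\frac{\partial b_i}{\partial x}\hat W}+\sum_{i=1}^p w_i\lrangle{\frac{\partial b_{w,i}}{\partial x}\hat W})$ with $A$ from \eqref{eq:diff_dynamics}. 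Subtracting, the terms proportional to $u_i$ collect into $a\sum_i u_i(\lrangle{\frac{\partial b_i}{\partial x}\hat W}-\partial_{b_i}\hat W)$, which vanishes by \cII, and the $w_i$-terms vanish identically by \cIII; the same cancellation (applied with $\ustar$) shows the nominal quantity satisfies $\lrangle{\hat A\hat W}-\dot{\hat W}=\lrangle{\frac{\partial f}{\partial x}\hat W}-\partial_f\hat W$. Hence $\lrangle{AW+BY}-\dot W+\lambda W=a(\lrangle{\hat A\hat W+B\hat Y}-\dot{\hat W}+\hat\lambda\hat W)$, and subtracting $\hat\lambda\hat W$ from \eqref{eq:synthesis-condition-ccm} yields $\lrangle{AW+BY}-\dot W+\lambda W\le-a\hat\lambda\hat W\le-a\hat\lambda\ubar\beta I_n<0$.

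It then remains to discharge the two LMIs by Schur complements. Since $-\mu I_p<0$, \eqref{eq:diff-Linf-synthesis-1} is equivalent to $\lrangle{AW+BY}-\dot W+\lambda W+\mu^{-1}B_wB_w^\top\le0$; bounding $B_w(x)B_w(x)^\top\le\sigma_w^2 I_n$ and using the previous step, it suffices that $\mu^{-1}\sigma_w^2\le a\hat\lambda\ubar\beta$, which for $\mu=\hat\alpha$ and $a$ as defined is in fact an equality. For \eqref{eq:diff-Linf-synthesis-2} with $C=I_n$, $D=0$ and $\alpha=\mu$, the $(2,2)$ block $(\alpha-\mu)I_p$ and the whole middle block-row/column are zero, so positive semidefiniteness reduces to that of the $2n\times2n$ matrix with diagonal blocks $\lambda W$ and $\alpha I_n$ and off-diagonal block $W$; by Schur complement ($\alpha I_n>0$) this amounts to $\lambda W-\alpha^{-1}W^2\ge0$, i.e. $W\le\alpha\lambda I_n$, which follows from $W=a\hat W\le a\bar\beta I_n$ together with $\alpha\lambda=\hat\alpha\hat\lambda=a\bar\beta$ (again an equality).

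The only genuinely delicate point is the bookkeeping in the second step: one must keep the Lie-derivative matrix $\partial_{b_i}\hat W$ distinct from the Jacobian term $\frac{\partial b_i}{\partial x}\hat W$, and carefully match the closed-loop derivative $\dot W$ (along \eqref{eq:dynamics_cl}, evaluated at the actual $u$ and $w$) against the nominal derivative $\dot{\hat W}$ (along \eqref{eq:dynamics-nom-wstar=0}, evaluated at $\ustar$), verifying that \cII and \cIII are exactly what is needed so that both reduce to $\lrangle{\frac{\partial f}{\partial x}\hat W}-\partial_f\hat W$. Everything afterward is routine, and the fact that the two scalar Schur-complement conditions hold with equality for the stated $a$ and $\hat\alpha$ is precisely why this scaling is the natural one.
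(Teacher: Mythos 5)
Your proof is correct and follows essentially the same route as the paper: Schur-complement \eqref{eq:diff-Linf-synthesis-1} to expose the $\mu^{-1}B_wB_w^\top$ term, use \cII--\cIII to cancel the $u$- and $w$-dependence so the $(1,1)$ block becomes $a$ times the left side of \eqref{eq:synthesis-condition-ccm} plus $a\hat\lambda\hat W$, absorb $B_wB_w^\top\le\sigma_w^2 I_n$ with $\hat W\ge\ubar\beta I_n$, and reduce \eqref{eq:diff-Linf-synthesis-2} (with $\alpha=\mu$) to $\lambda W-\alpha^{-1}W^2\ge0$, i.e.\ $W\le\alpha\lambda I_n=a\bar\beta I_n$. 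The only differences are cosmetic (you pass to the scalar bound $-a\hat\lambda\ubar\beta I_n$ where the paper keeps $\ubar\beta B_wB_w^\top\le\sigma_w^2\hat W$, and you drop the zero middle block before the Schur step where the paper uses a square-root congruence), and your bookkeeping of the Killing-vector cancellations is exactly what the paper's terse "equivalent to" step relies on.
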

\begin{proof}

Under the constraint of $\mu>0$, by applying the Schur complement, we can rewrite \eqref{eq:diff-Linf-synthesis-1} as
$
\langle AW+BY \rangle - \dot W + \lambda W + \frac{1}{\mu} B_w B_w^\top\leq 0.$
Due to \eqref{eq:W-What-lambda-lambdahat-csts} and Conditions \cII~and \cIII, the preceding inequality is equivalent to  $a \langle A \hat W+B \hat Y \rangle -  a \dot {\hat W} + a \hat \lambda W + \frac{1}{\hat \alpha} B_w B_w^\top\leq 0$, which, due to \cref{eq:synthesis-condition-ccm}, will hold if 
$
    B_w B_w^\top \leq a \hat \alpha \hat \lambda \hat W, 
$
or  equivalently (considering the definitions of $a$ below \cref{eq:W-What-lambda-lambdahat-csts} and $\hat \alpha$ in \cref{eq:alpha-hat-defn})
\begin{equation}\label{eq:beta-Bw-W-ineq}
    \ubar{\beta } {B_w}B_w^ \top   \leq {\left( {\mathop {\sup }_{x \in \mcX} \bar \sigma ({B_w}(x))} \right)^2}\hat W   
\end{equation}
holds. Since ${B_w}B_w^ \top  \leq {\left( {\mathop {\sup }_{x \in \mcX} \bar \sigma ({B_w}(x))} \right)^2}I_n$ and $\hat W \geq \ubar\beta I_n$, \eqref{eq:beta-Bw-W-ineq} holds, and therefore, \eqref{eq:diff-Linf-synthesis-1} holds.

On the other hand, by applying the Schur complement, \eqref{eq:diff-Linf-synthesis-2} with $C=I_n$ and $D=0$ is equivalent to
\begin{subequations}\label{eq:alpha-mu-W-together}
\begin{align} 
    \alpha\geq \mu, \label{eq:alpha-mu} \\ 
    \lambda W-\frac{1}{\alpha} WW \geq 0. \label{eq:W-WW-ineq}
\end{align}
\end{subequations}
 Equation \eqref{eq:alpha-mu} trivially holds due to \eqref{eq:W-What-lambda-lambdahat-csts}. Rewrite $W(x)$ as $W(x) = \Lambda(x)^\top \Lambda(x)$ with $\Lambda(x)$ being non-singular  for any $x\in \mcX$. Multiplying the left hand side of \eqref{eq:W-WW-ineq} by $\Lambda(x)^{-\top}$ and its transpose from the left and right, respectively, leads to $
    \Lambda \Lambda^\top \leq \frac{\alpha\lambda}{a} I_n.
$
Due to \eqref{eq:W-What-lambda-lambdahat-csts}, the preceding inequality is equivalent to $\Lambda \Lambda^\top \leq \bar \beta I_n$, which holds since $\bar\lambda(\Lambda(x)\Lambda^\top(x))=  \bar\lambda(\Lambda^\top(x)\Lambda(x)) = \bar\lambda(W(x))$  for any $ x$ with $\bar\lambda$ denoting the largest eigenvalue. Therefore, \eqref{eq:diff-Linf-synthesis-2} holds. The proof is complete. \qed

\end{proof}

According to \cref{lemma:nominal-ineq-imply-PPG-ineq}, if we can find matrices $\hat{W}$ and $\hat{Y}$ and constants $\hat \lambda$ satisfying the inequality \eqref{eq:synthesis-condition-ccm}, which guarantees the contraction of the nominal closed-loop system and ensures an \Linf-gain bound $\hat \alpha$ from disturbances to states, we can obtain the same \Linf-gain bound using our approach (\cref{them:synthesis-universal-Linf-gain}), if  we choose $W(x)$ and $Y(x)$ in \cref{eq:diff-Linf-synthesis-1,eq:diff-Linf-synthesis-2} to be scaled versions of $\hat W(x)$ and $\hat Y(x)$ in \eqref{eq:synthesis-condition-ccm}, i.e., enforcing the constraints in  \eqref{eq:W-What-lambda-lambdahat-csts}. However, if we relax such constraints in the optimization problem 
\optrccm, we are guaranteed to obtain a less conservative bound $\alpha$, i.e., $\alpha \leq \hat \alpha$.  This observation is summarized in the following theorem with the straightforward proof omitted.
\begin{theorem}\label{them:ccm-rccm-comparison}
Assume there exist a metric $\hat W(x)$, a matrix function $\hat Y(x)$, and a constant $\hat \lambda>0$  satisfying \eqref{eq:synthesis-condition-ccm} and \cref{assump:What-u-w-independent}. Then, we can always find $W(x)$, $Y(x)$, $\lambda>0$, $\mu>0$ and $\alpha\leq \hat \alpha$ satisfying \cref{eq:diff-Linf-synthesis-1,eq:diff-Linf-synthesis-2} with $C=I_n$ and $D=0$, where $\hat \alpha$ is defined in \eqref{eq:alpha-hat-defn}.
\end{theorem}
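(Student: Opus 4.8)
The plan is to obtain the statement as an immediate corollary of \cref{lemma:nominal-ineq-imply-PPG-ineq}, which already produces an explicit admissible tuple. Under the hypotheses of the theorem, namely the existence of $\hat W(x)$, $\hat Y(x)$ and $\hat\lambda>0$ satisfying \eqref{eq:synthesis-condition-ccm} together with \cref{assump:What-u-w-independent}, \cref{lemma:nominal-ineq-imply-PPG-ineq} guarantees that the choice \eqref{eq:W-What-lambda-lambdahat-csts}, i.e., $W=a\hat W$, $Y=a\hat Y$, $\lambda=\hat\lambda$ and $\mu=\alpha=\hat\alpha$ with $a=\sup_{x\in\mcX}\bar\sigma(B_w(x))/\sqrt{\bar\beta\,\ubar{\beta}}$, satisfies \eqref{eq:diff-Linf-synthesis-1} and \eqref{eq:diff-Linf-synthesis-2} specialized to $C=I_n$, $D=0$ (the case $g(x,u)=x$). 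This tuple realizes $\alpha=\hat\alpha\leq\hat\alpha$, which is exactly the assertion. If one additionally wants it to be feasible for \optrccm\ rather than for the two matrix inequalities alone, it suffices to observe that $W=a\hat W\geq a\,\ubar{\beta}\,I_n$ on $\mcX$, so the eigenvalue floor $W\geq\underline{\beta} I_n$ holds for any $\underline{\beta}\leq a\,\ubar{\beta}$.

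It then remains only to record the reduced-conservatism reading used in the surrounding discussion. Since $\alpha$ is absent from \eqref{eq:diff-Linf-synthesis-1} and enters \eqref{eq:diff-Linf-synthesis-2} affinely (through the $(2,2)$ and $(3,3)$ diagonal blocks), minimizing $\alpha$ over $(W,Y,\lambda,\mu)$, i.e., solving \optrccm, is well posed, and the feasible point above certifies that its optimum $\alpha^\star$ satisfies $\alpha^\star\leq\hat\alpha$. Moreover, dropping the coupling constraints \eqref{eq:W-What-lambda-lambdahat-csts} and allowing $W$, $Y$ to be arbitrary rather than scalar multiples of $\hat W$, $\hat Y$ only enlarges the feasible set, so the minimum of $\alpha$ can only decrease; this is precisely why the RCCM tube is never larger than the CCM/ISS tube of \cref{lemma:ccm-tube}.

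The one place that takes genuine work, and the sole use of the Killing conditions \cII\ and \cIII\ of \cref{assump:What-u-w-independent}, has already been discharged inside the proof of \cref{lemma:nominal-ineq-imply-PPG-ineq}: the hypothesis \eqref{eq:synthesis-condition-ccm} is stated along the nominal dynamics \eqref{eq:dynamics-nom-wstar=0}, whereas \eqref{eq:diff-Linf-synthesis-1} asks for $\langle AW+BY\rangle-\dot W+\lambda W$ along the perturbed closed loop \eqref{eq:dynamics_cl} together with the extra term $\tfrac{1}{\mu}B_wB_w^\top$ produced by the Schur complement. Conditions \cII\ and \cIII\ force the $u$- and $w$-dependent pieces of $\langle AW\rangle$ and of $\dot W$ to cancel, so the perturbed-dynamics form of \eqref{eq:synthesis-condition-ccm} coincides with the nominal one, and what remains reduces to the pointwise bound $\ubar{\beta}\,B_wB_w^\top\leq(\sup_{x\in\mcX}\bar\sigma(B_w(x)))^2\hat W$, which follows from $B_wB_w^\top\leq(\sup_{x\in\mcX}\bar\sigma(B_w(x)))^2 I_n$ and $\hat W\geq\ubar{\beta}I_n$. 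Hence the present theorem contributes only the optimization observation above; I expect no real obstacle, only the bookkeeping of the $C=I_n$, $D=0$ specialization and of the $\mcX$-restriction entering \optrccm. \qed
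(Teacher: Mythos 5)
Your proposal is correct and matches the paper's intent exactly: the paper omits the proof as straightforward precisely because \cref{lemma:nominal-ineq-imply-PPG-ineq} already supplies the admissible tuple \eqref{eq:W-What-lambda-lambdahat-csts} with $\alpha=\hat\alpha$, and the surrounding discussion makes the same point you do about relaxing the scaling constraints in \optrccm\ to obtain $\alpha\leq\hat\alpha$. No gaps; your additional remarks on the eigenvalue floor and the role of \cII, \cIII\ are consistent with the paper's argument.
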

\begin{remark}
\cref{them:ccm-rccm-comparison} indicates that our proposed RCCM approach is {\it guaranteed} to yield a {\it tighter} tube for the actual states than the CCM approach in \cite{singh2019robust}, under \cref{assump:What-u-w-independent}. 
\end{remark}

\section{Simulation Results}\label{sec:simulations}
In this section, we apply the proposed approach to a planar VTOL vehicle (illustrated in Fig.~\ref{fig:illustration}) \black{and a 3D quadrotor and perform extensive comparisons with the CCM-based approach in \cite{singh2019robust}. All the subsequent computations and simulations were done in Matlab R2021a\footnote{Matlab codes are available at \rurl{github.com/boranzhao/robust_ccm_tube}.}.} A video for visualizing the simulation results is available at \rurl{youtu.be/mrN5iQo7NxE}.
\subsection{Planar VTOL vehicle}\label{sec:sub-pvtol}
The state vector is defined as $x = [p_x, p_z, \phi, v_x,v_z,\dot \phi]^\top$, where $p = [p_x,p_z]^\top$ is the position in $x$ and $z$ directions, respectively, $v_x$ and $v_z$ are the slip velocity (lateral) and the velocity along the thrust axis in the body frame of the vehicle, $\phi$ is the angle between the $x$ direction of the body frame and the $x$ direction of the inertia frame. The input vector $u=[u_1,u_2]$ contains the thrust force produced by each of the two propellers.  The dynamics of the vehicle are given by
{\setlength{\arraycolsep}{1pt}
{\begin{equation*}
\small
   \begin{bmatrix}
    \dot p_x \\
    \dot p_z \\
    \dot \phi \\
    \dot v_x \\
    \dot v_z \\
    \ddot \phi 
    \end{bmatrix} =
    \begin{bmatrix}
    v_x \cos(\phi)-v_z \sin(\phi)\\
    v_x\sin(\phi) + v_z \cos(\phi) \\
    \dot \phi \\
    v_z\dot \phi - g\sin(\phi) \\
    -v_x\dot \phi - g\cos(\phi) \\
    0  
    \end{bmatrix} +
    \begin{bmatrix}
    0 & 0 \\
    0 & 0 \\
    0 & 0 \\
    0 & 0 \\
    \frac{1}{m} & \frac{1}{m} \\
    \frac{l}{J} & -\frac{l}{J}
    \end{bmatrix}u + \begin{bmatrix}
     0 \\
     0 \\
     0\\
     \cos(\phi) \\
     -\sin(\phi)\\
      0
    \end{bmatrix}w,
\end{equation*}}}where $m$ and $J$ denote the mass and moment of inertia about the out-of-plane axis and $l$ is the distance between each of the propellers and the vehicle center, and $w$ denotes the disturbance in $x$ direction of the inertia frame. Following \cite{singh2019robust}, the parameters were set as $m=0.486$ kg, $J=0.00383~\textup{Kg m}^2$, and $l=0.25$ m.  
				   
\subsubsection{Computation of CCM/RCCM and associated tubes}
\black{We parameterized both the RCCM $W$ and the CCM $\hat W$ as polynomial matrices in  $(\phi, v_x)$ with up to degree 4 monomials}. When searching for CCM/RCCM, we also imposed the following bounds: $(v_x,v_z)\in [-2,2]\times[1,1]$ m/s, and $(\phi,\dot \phi) \in [-60^\circ, 60^\circ]\times [-60,60]^\circ/s$, which can be concatenated as the vector constraint $h(x)\geq 0$.  For a fair comparison of the proposed RCCM-based approach and the CCM-based approach in \cite{singh2019robust}, we used same parameters when searching CCM and RCCM whenever possible. For instance, we used the same basis functions for parameterizing $W$ and $\hat W$ when applying the SOS techniques to solve the optimization problems, and imposed the same lower bound for $W$ and $\hat W$: $W\geq 0.01 I_6$ and $\hat W \geq 0.01 I_6$.

We first considered the optimization of the tube size for all the states, on which \cite{singh2019robust} is focused.  For simplicity, we did not use weights for the states. For RCCM synthesis, we included a penalty for large control efforts when solving \optrccm~by setting $g(x,u) = [x^\top,u^\top]^\top$, and denote the resulting controller as \textbf{RCCM}.  Additionally, we designed another RCCM controller with a focus on optimizing the tubes for the {\it position states} and inputs, denoted as {\bf RCCM-P},  by setting $g(x,u) = [p_x, p_z,u^\top]^\top$. We denote the controller designed using the CCM approach in \cite{singh2019robust} as {\bf CCM}. 

We considered a cross-wind disturbance along the $x$ direction of the inertia frame with effective acceleration up to $1$ m/s (i.e., $\bar w = 1$), which is {\it 10 times} as large as the disturbance considered in \cite{singh2019robust}.   We swept through a range of values for $\lambda$ (setting $\hat \lambda=\lambda$) and solved the \optrccm~in \cref{sec:sub-opt-rccm} to search for the RCCM and the optimization problem in \cite[Section~4.2]{singh2019robust} to search for the CCM, 
using SOS techniques with YALMIP \cite{YALMIP} and Mosek solver \cite{andersen2000mosek}. After obtaining the RCCM, we further solved \optref~in \cref{sec:sub-opt_refine_u_x_tubes} by gridding the state space to get refined tubes for different variables. The results are shown in Fig.~\ref{fig:tube_size_lambda}. 
According to the top plot, while both controllers focused on optimizing the tube size for {\it all states} without using weights, RCCM yielded a much smaller tube than CCM. 
RCCM-P yielded a tube of similar size for all states compared to CCM, which came as no surprise since RCCM-P focused on minimizing the tube size for {\it position states only}, i.e., $(p_x, p_z)$. From the middle and bottom plots, one can see that RCCM-P yielded much smaller tubes for both position states and inputs than RCCM, which further outperforms CCM by a large margin. 
\begin{figure}
    \centering
    \includegraphics[width=1\columnwidth]{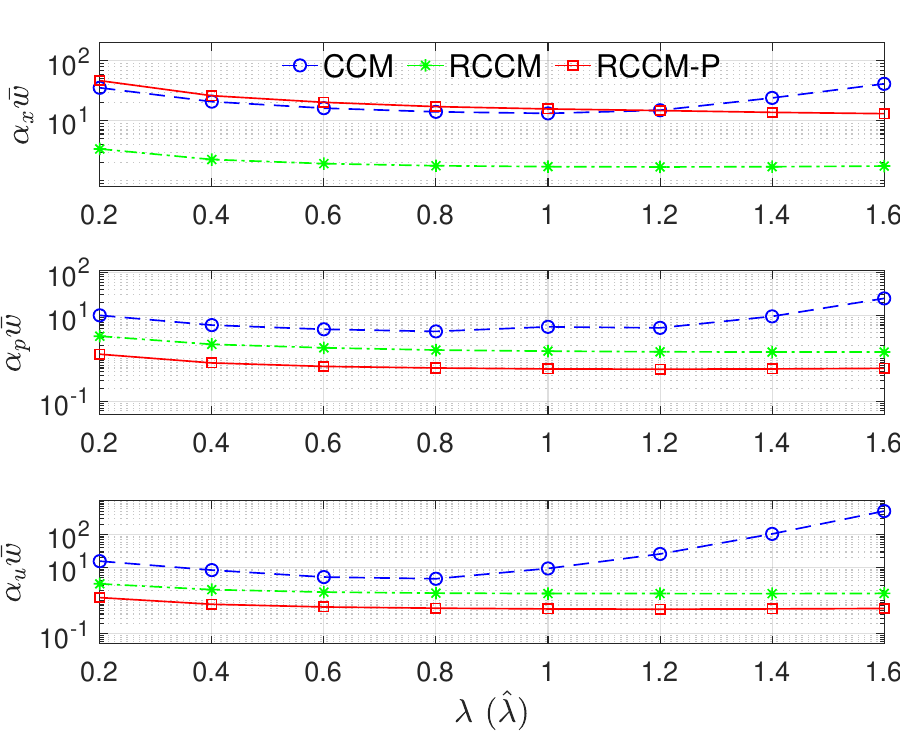}
	    \caption{Tube size for all states (top), position states (middle) and inputs (bottom) versus $\lambda$ ($\hat \lambda$) value in the presence of disturbances bounded by $\bar w = 1$}
    \label{fig:tube_size_lambda}
\vspace{-3mm}
\end{figure}
For subsequent tests and simulations, we selected a best $\lambda$ value for each of the three controllers in terms of tube size for $(p_x, p_z)$, since the vehicle position is of more importance in tasks with collision-avoidance requirements. The best values for CCM, RCCM, RCCM-P are determined to be 0.8, 1.4 and 1.2, respectively. Figure~\ref{fig:tube} depicts the input tube, and the projection of the state tube onto different planes, yielded by each of the three controllers with the best $\lambda$ value. It is no surprise that RCCM-P, while yielding much smaller tubes for $(p_x,p_z)$ and inputs, results in relatively larger tubes for $(v_x,v_z)$ and $(\phi,\dot\phi)$. 
\begin{figure}
    \centering
    \includegraphics[width=1\columnwidth]{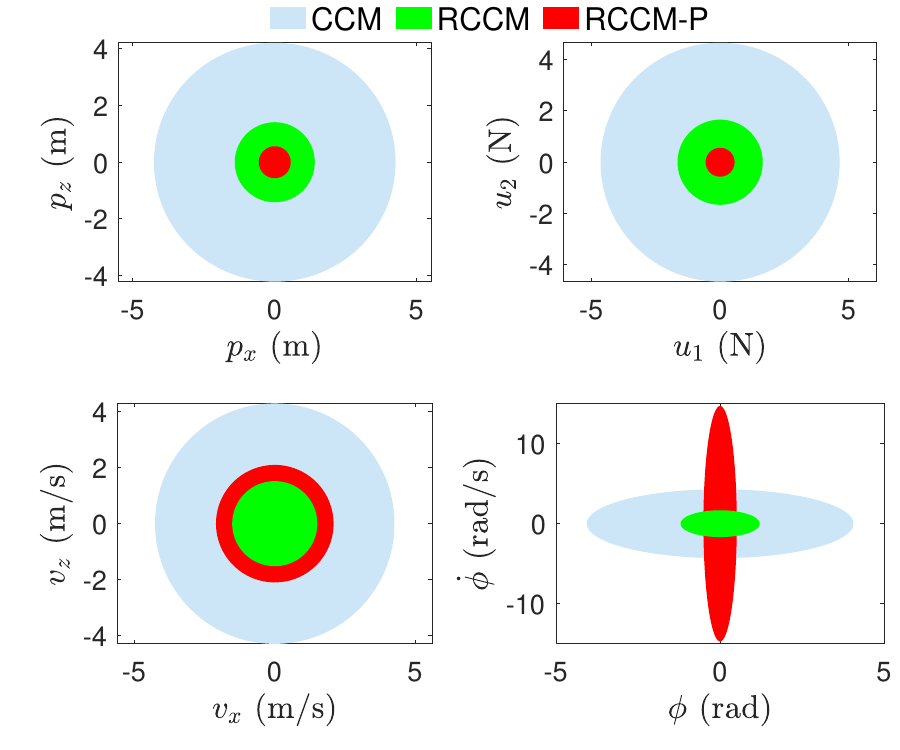}
    \caption{Projection of state and input tubes under wind disturbances with effective acceleration up to 1 m/s$^2$}
    \label{fig:tube}
\vspace{-3mm}							 
\end{figure}
\subsubsection{Trajectory tracking and verification of tubes}\label{sec:traj_test_rsts}
To test the trajectory tracking performance of the three controllers in scenarios and evaluate the  conservatism with the derived tubes, we considered a task of navigation from the origin to target point $(10,10)$. We first planed a nominal trajectory with the objective of minimal force and minimal travel time, using OptimTraj \cite{kelly2017intro-traj-opt}, where the state constraint $h(x)\geq 0$, used in searching for CCM/RCCM, was enforced. With the nominal state and input trajectories, we simulated the performance of controllers in the presence of a wind disturbance, artificially simulated by $w(t) = 0.8+0.2\sin(2\pi t/10)$. OPTI \cite{Currie12opti} and Matlab \texttt{fmincon} solvers were used to solve the geodesic optimization problem at each sampling instant for all the three controllers (see \cref{sec:sub-control-law} for details). With Matlab 2021a running on a PC with Intel i7-4790 CPU and 16 GB RAM and generated C codes for evaluating the cost function and gradient, it took roughly $20\sim30$ milliseconds to solve the optimization problem for computing the geodesic once. 
The results of the position trajectories along with the tubes projected to the $(p_x,p_z)$ plane are shown in Fig.~\ref{fig:traj}. First, it is clear that the actual trajectory under each controller always stays in the associated tube. Second, in terms of tracking performance, RCCM-P and CCM perform the best and worst, respectively. 
\begin{figure}
    \centering
    \includegraphics[width=1\columnwidth]{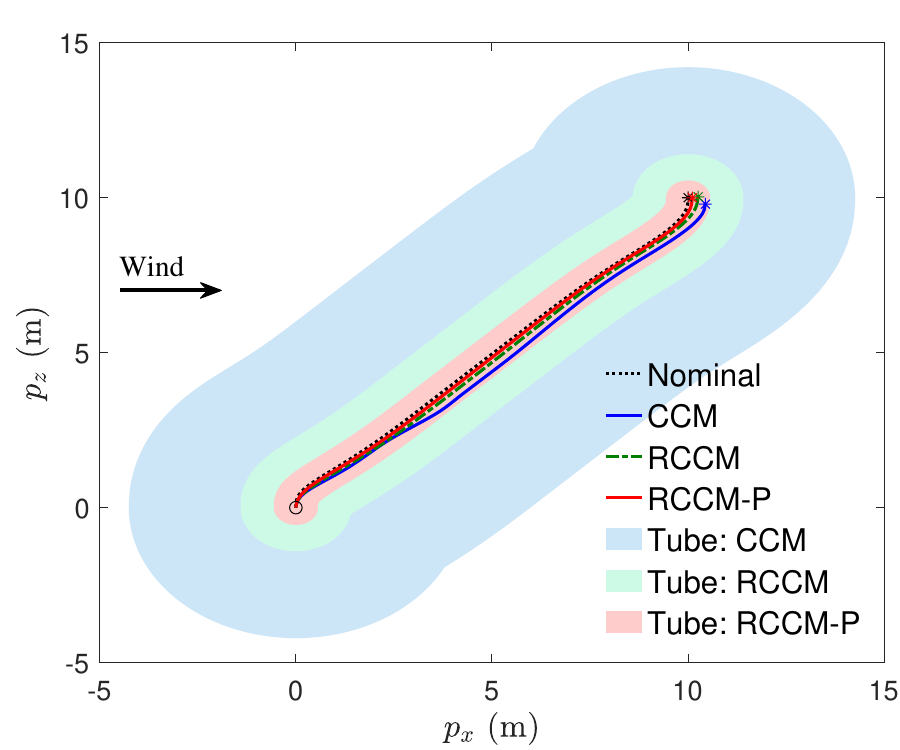}
     \includegraphics[width=1\columnwidth]{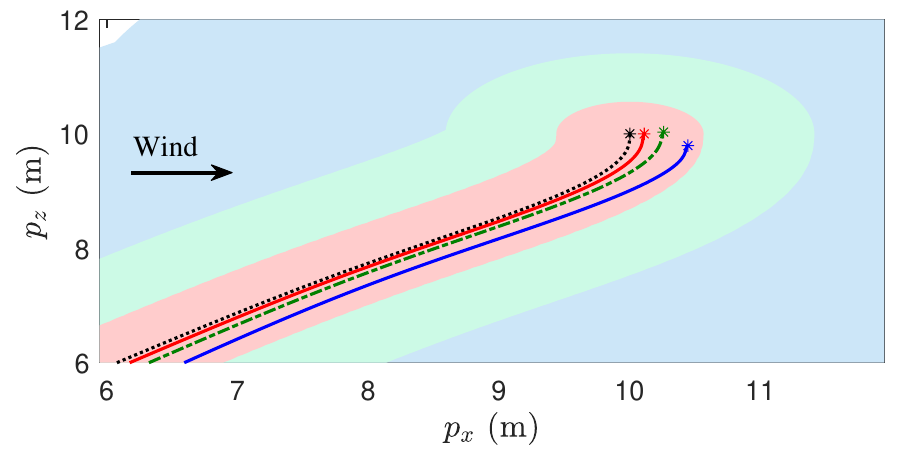}
    \caption{Tracking of a nominal trajectory by different controllers: full (top) and zoomed-in (bottom) view}
    \label{fig:traj}
    \vspace{-4mm}
\end{figure}
\subsubsection{Feedback motion planning and tracking in the presence of obstacles}\label{sec:traj_w_obs_test_rsts}
We now consider a joint trajectory planning and tracking problem for the same task considered in \cref{sec:traj_test_rsts} but in the presence of obstacles, illustrated as black circles in \cref{fig:traj_w_obs}. We followed the feedback motion planning framework and incorporated the tubes for both position states and inputs when planning the trajectory. For simplicity, we ignored the tubes for other states (i.e., $v_x$, $v_z$, $\phi$, $\dot \phi$) in the planning.  The planned trajectory and tube associated with each controller are denoted by a black dotted line and a shaded area in \cref{fig:traj_w_obs}. As expected, the trajectory optimizer found different trajectories for the three controllers due to different tube sizes. The travel time associated with the planned trajectories under CCM, RCCM, and RCCM-P are 18.0, 11.8 and 10.1 seconds, respectively, with RCCM-P yielding the shortest travel time. The actual trajectories in the presence of the disturbances are also included in \cref{fig:traj_w_obs}. It is clear that the actual trajectory under each of three controllers always stays in the tube around its associated nominal trajectory and remains collision-free. Once again, RCCM-P yielded the smallest tracking error. 
\begin{figure}
    \centering
    \includegraphics[width=1\columnwidth]{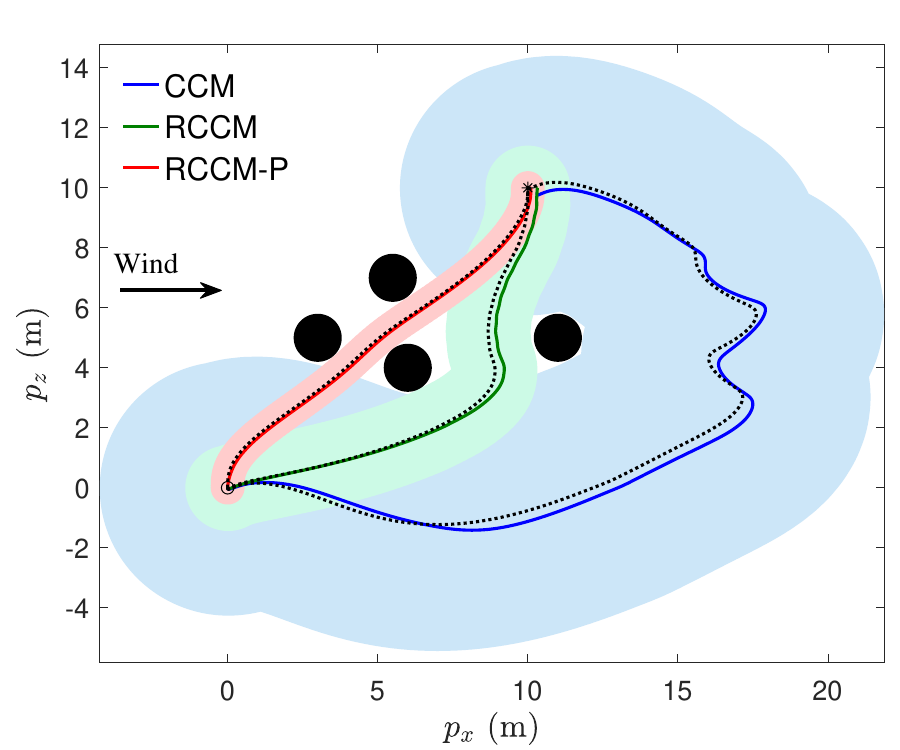}
     \includegraphics[width=1\columnwidth]{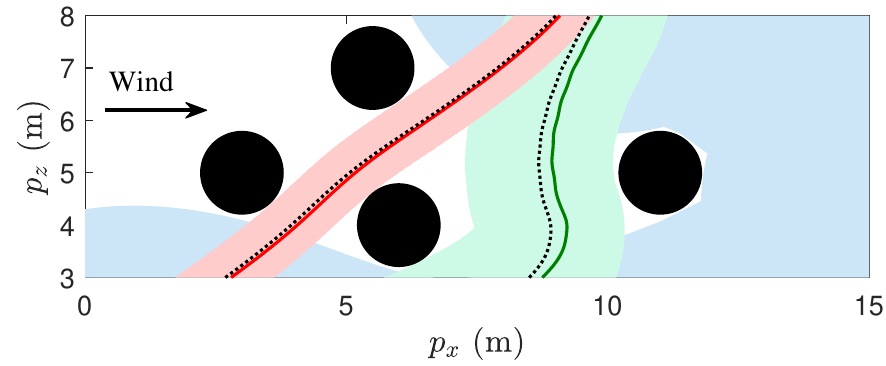}
    \caption{Planning and tracking of a nominal trajectory by different controllers incorporating safety tubes: full (top) and zoomed-in (bottom) view. Dotted lines denote planned trajectories. Shaded areas denote the tubes for the position states.}
    \label{fig:traj_w_obs}	
    \vspace{-3mm}
\end{figure}

\begin{figure}[h]
    \centering
    \includegraphics[width=1\columnwidth]{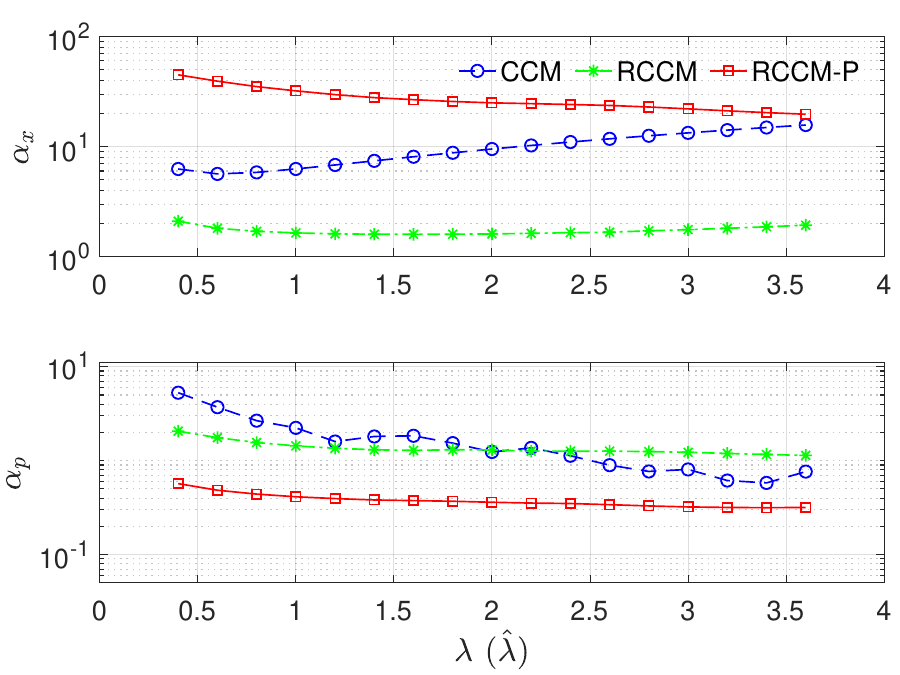}
	    \caption{Tube size gain for all states (top) and position states (bottom) versus $\lambda$ ($\hat \lambda$) value}
    \label{fig:tube_size_lambda-quad}
    \vspace{-3mm}
\end{figure}

\subsection{3D quadrotor}
The 3D quadrotor model is taken from \cite{singh2019robust} and has the state-space representation given by $x=[p_x,p_y,p_z,\dot p_x, \dot p_y, \dot p_z,\tau, \phi, \theta, \psi]^\top$, where the position $p=[p_x,p_y,p_z]^\top \in \mbR^3$ and corresponding velocities are expressed in the global inertial (vertical axis pointing down) frame.  Adopting the North-East-Down frame convention for the quadrotor body and the XYZ Euler-angle rotation sequence, the attitude (roll, pitch, yaw) is parameterized as
$(\phi, \theta,\psi)$ and $\tau >0$ is the total (normalized by mass) thrust generated by the four rotors. For controller design, we consider  $u \trieq [\dot \tau, \dot \phi,\dot \theta]^\top$ as the control input. The actual implementation embeds the $\dot \tau$ term within an integrator, and the resulting thrust and angular velocity reference (after being converted to body rate reference) are passed to a lower-level controller that is assumed to operate at a much faster
time-scale. Given this parameterization, the dynamics of the quadrotor may be written as
\begin{equation}
  \hspace{-2mm}  \begin{bmatrix}
    \ddot p_x \\ \ddot p_y \\ \ddot p_z
    \end{bmatrix}
    = ge_3-\tau \hat b_z + w = 
    \begin{bmatrix}
    -\tau \sin(\theta) +w_1\\ \tau \cos(\theta) \sin(\phi) +w_2\\ g - \tau \cos(\theta)\cos(\phi) + w_3
    \end{bmatrix},
\end{equation}
where $g$ is the local gravitational acceleration, $e_3 = [0,0,1]^\top$, $\hat b_z$ is the body-frame $z$-axis, and $w=[w_1,w_2,w_3]^\top\in\mbR^3$ denotes the disturbance. The dynamics of $(\tau, \phi, \theta, \psi)$ reduce trivially to first-order integrators. We impose the following bounds: $(\phi,\theta)\!\in\! [-60^{\circ},60 ^{\circ}]^2$, $\tau \!\in \![0.5,2]g$, $(\dot \phi, \dot \theta)\!\in\! [-180^{\circ}/\textrm{s},180 ^{\circ}/\textrm{s}]^2$, and $\dot\tau \!\in\! [-5,5]g/\textrm{s}$, which are sufficient for fairly aggressive maneuvers. Since yaw control is not a focus here, we simply set $\dot \psi =0$.
\subsubsection{Computation of CCM/RCCM and associated tubes}
We parameterize both the RCCM $W$ and the CCM $\hat W$ as polynomial matrices in $(\tau,\phi,\theta)$ with up to degree 3 monomials. Additionally, the top left $6\times 6$ block of $\hat W$ was imposed to be constant, i.e., independent of  $(\tau,\phi,\theta)$ to ensure that the resulting synthesis condition did not depend on $u$ (see \cite{singh2019robust} for details). The RCCM synthesis condition \cref{eq:diff-Linf-synthesis-1}, however, depends on $(\dot \tau, \dot \phi, \dot \theta)$, which is why we impose the bounds on $(\dot \tau, \dot \phi, \dot \theta)$ mentioned above. Similar to \cref{sec:sub-pvtol}, we first consider the optimization of the tube size for all the states. For simplicity, we do not use weights for the states, which can also be considered as using equal weights for all the states. For RCCM synthesis, we include a penalty for large control efforts required when solving \optrccm~by setting $g(x,u) = [x^\top\!, 0.02\dot \tau , 0.05 \dot \phi, 0.05\dot \theta]^\top$, and denote the resulting controller as \textbf{RCCM}.  Additionally, we design another RCCM controller with a focus on optimizing the tubes for the {\it position states}. For this, we set $g(x,u) = [p_x, p_y,p_z,0.02\dot \tau, 0.05 \dot \phi, 0.05\dot \theta]^\top$ and denote the resulting controller as {\bf RCCM-P}. We denote the controller designed using the CCM-based approach in \cite{singh2019robust} as {\bf CCM}. Through numerical experimentation, we found that imposing the lower bound constraint  $W\geq 0.01 I_9$ yielded good performance for searching $W$, while imposing the constraint $\hat W\geq I_9$ yielded good performance when searching $\hat W$. We swept through a range of values from $0.4$ to $3.6$ for $\lambda$ (setting $\hat \lambda=\lambda$) and solved the \optrccm~in \cref{sec:sub-opt-rccm} to search for the RCCM and the optimization problem in \cite[Section~4.2]{singh2019robust} to search for the CCM. We first tried the SOS technique used in \cref{sec:sub-pvtol} to solve the involved optimization problems, but found it was not reliable especially for RCCM synthesis, taking notoriously long time while still yielding unsatisfactory results. Therefore, we eventually chose to grid the set of $(\tau,\phi,\theta)$ (and additionally $(\dot \tau,\dot \phi,\dot \theta)$ for RCCM search), and solved the resulting optimization problem with a finite number of LMIs with YALMIP \cite{YALMIP} and Mosek solver \cite{andersen2000mosek}.  
After obtaining the RCCM, we further solved \optref~in \cref{sec:sub-opt_refine_u_x_tubes} using the gridding technique to get refined tubes for different variables. The results in terms of tube size gain are shown in Fig.~\ref{fig:tube_size_lambda-quad}. 
As shown in the top plot, while both controllers focused on optimizing the tube size for {\it all states} without using weights, RCCM yielded a much smaller tube than CCM. 
RCCM-P yielded a tube of similar size for all states compared to CCM, which came as no surprise since RCCM-P focused on minimizing the tube size for {\it position states only}. From the bottom plots, one can see that RCCM-P yielded much smaller tubes for both position states and inputs than RCCM, which further outperforms CCM when $\lambda$ is less than 2.
Note that \cref{assump:What-u-w-independent} does not hold anymore for this example, as the condition \cII~cannot be satisfied. Therefore, \cref{them:ccm-rccm-comparison} does not hold, which indicates that we cannot {\it guarantee} that RCCM will yield tighter tubes for all the states than CCM. 
\begin{figure}[h]
    \centering
        \vspace{-2mm}
    \includegraphics[width=1\columnwidth]{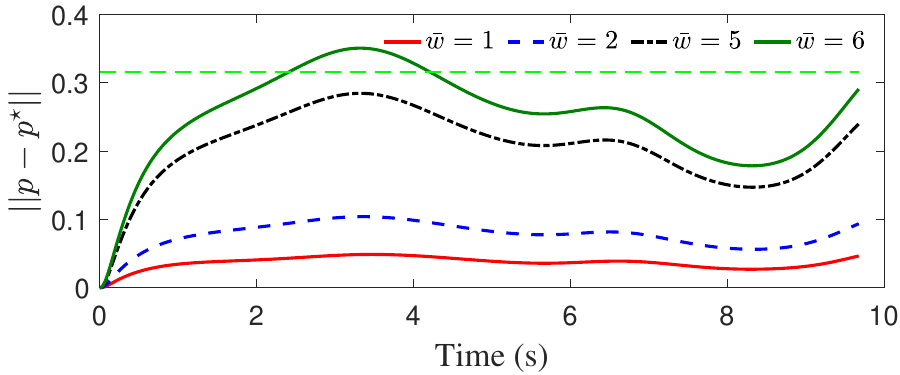}
    \caption{Tracking error for the position states under RCCM-P. The green dashed line denotes the theoretical bound associated with $\bar w= 1$.}
    \label{fig:traj-err-quad}
    \vspace{-3mm}
\end{figure}
\begin{figure}[h]
    \centering
        \vspace{-2mm}
    \includegraphics[width=1\columnwidth]{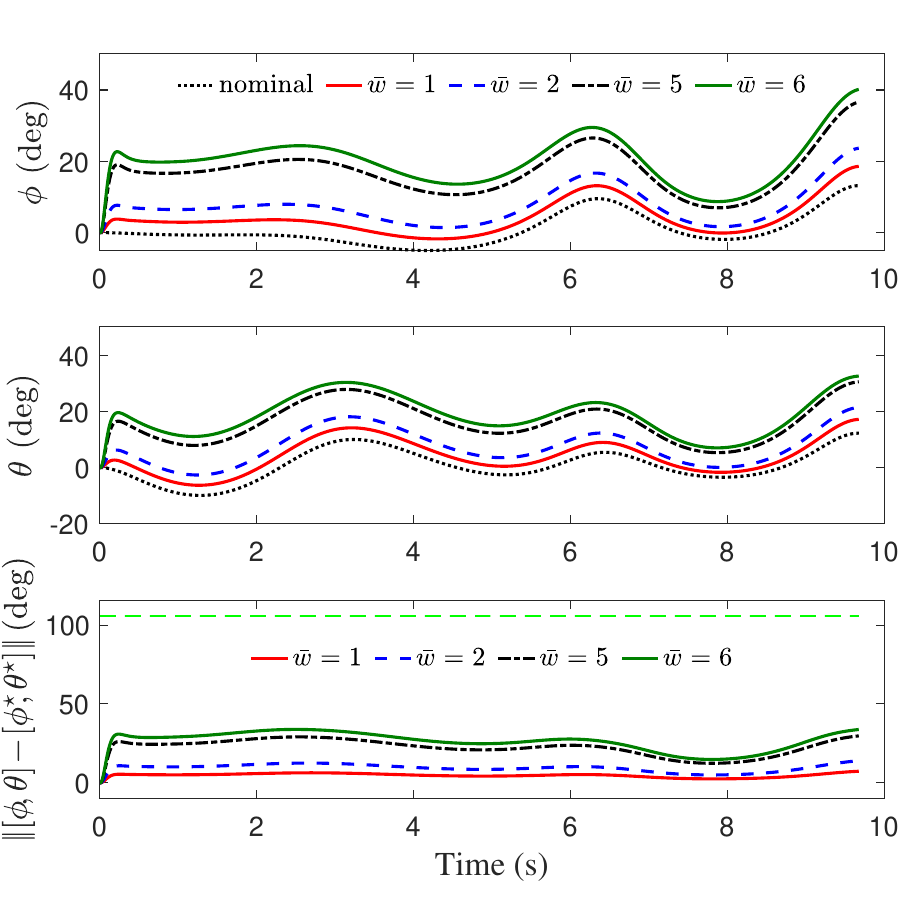}
        \vspace{-3mm}
    \caption{Nominal and actual rotational angles under RCCM-P. The green dashed lines denote the theoretical bounds associated with $\bar w= 1$.}
    \label{fig:rot-angle-quad}
        \vspace{-1mm}
\end{figure}
\begin{figure*}[h]
    \centering
        \vspace{-3mm}
     \begin{subfigure}[b]{0.4\textwidth}
		\centering
		 \includegraphics[height=6cm]{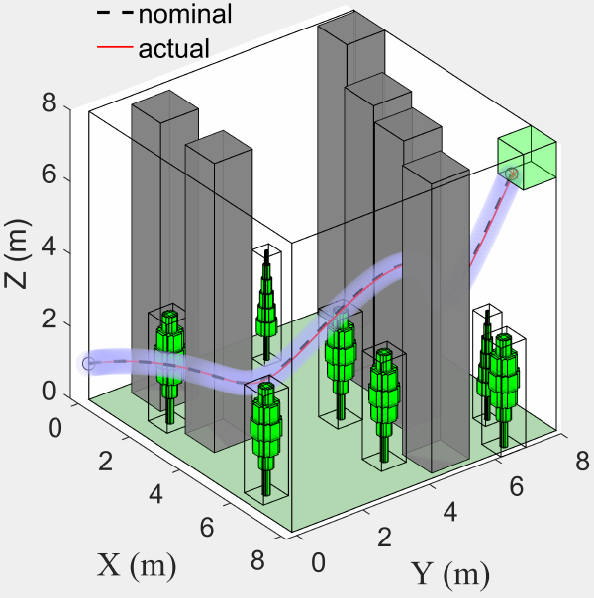}
		\caption{RCCM-P: Isometric view}
		\label{fig:traj_qaud_rccm_iso}
	\end{subfigure}%
	 \begin{subfigure}[b]{0.4\textwidth}
		\centering		     \includegraphics[height=6cm]{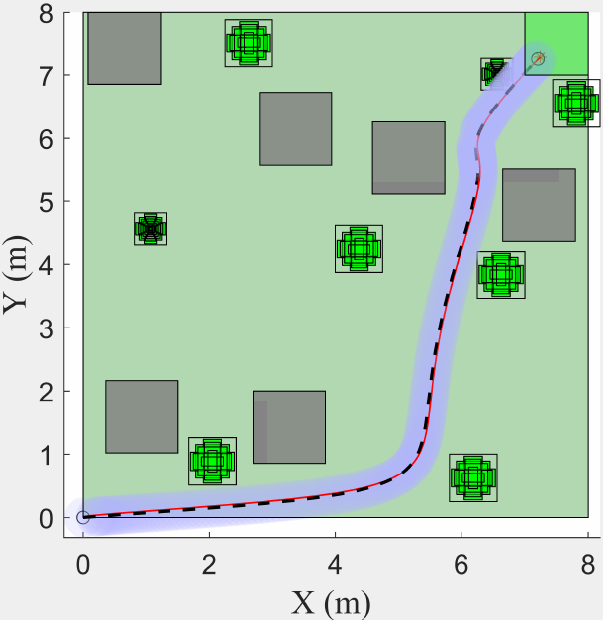}
		\caption{RCCM-P: Overhead view}
		\label{fig:traj_qaud_rccm_top}
	\end{subfigure} \vspace{3mm}\\
\begin{subfigure}[b]{0.4\textwidth}
		\centering
		 \includegraphics[height=6cm]{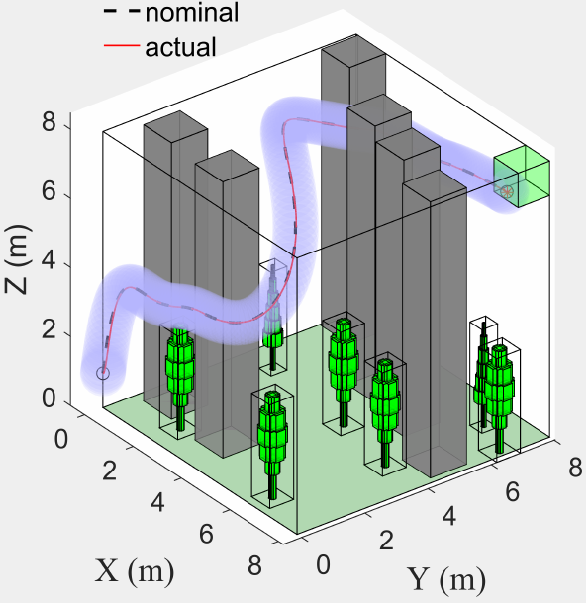}
		\caption{CCM: Isometric view}
		\label{fig:traj_qaud_ccm_iso}
	\end{subfigure}%
	 \begin{subfigure}[b]{0.4\textwidth}
		\centering	     \includegraphics[height=6cm]{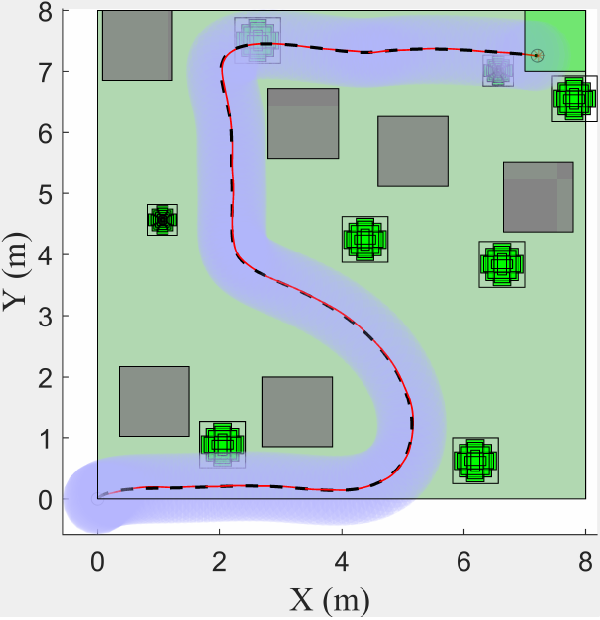}
		\caption{CCM: Overhead view}
		\label{fig:traj_qaud_ccm_top}
	\end{subfigure}
    \caption{Planned nominal and actual trajectories in an obstacle-rich environment under the RCCM-P and CCM controllers. Actual trajectories consistently stay in the (light blue shaded) ellipsoidal tubes around the nominal trajectories.}
    \label{fig:traj-qud}
\end{figure*}
\subsubsection{Feedback motion planning and tracking in cluttered environments}
\begin{figure}[h]
    \centering
        \vspace{-2mm}
    \includegraphics[width=1\columnwidth]{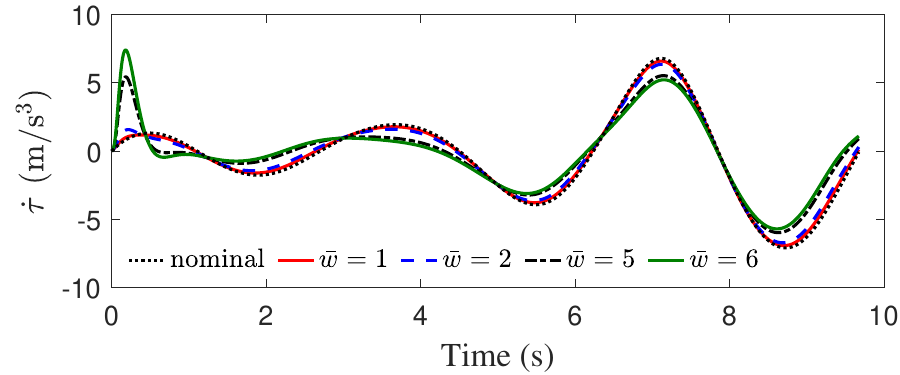}
    \caption{Nominal and actual derivatives of the total thrust (one of the control inputs)  under RCCM-P}
    \label{fig:control-quad}
    \vspace{-3mm}
\end{figure}
\begin{figure}[h]
    \centering
        \vspace{-3mm}
    \includegraphics[width=1\columnwidth]{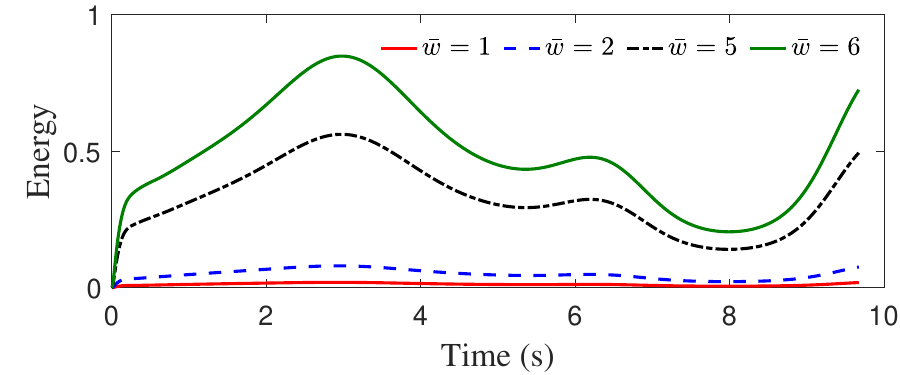}
    \caption{Riemannian energy under RCCM-P}
    \label{fig:energy}
        \vspace{-3mm}
\end{figure}
To verify the controller performance, we randomly initialized the obstacle environments for the quadrotor, depicted in \cref{fig:traj-qud}. The task for the quadrotor is to navigate from the start point $[0,0,0]^\top$ to the goal region, depicted by the light green box, while avoding collisions. 
We considered a wind disturbance of up to $1~\textup{m/s}^2$, i.e., $\norm{w}\leq \bar w = 1$,  under which RCCM-P yielded a tube of a 0.32 m radius ball (i.e., $\alpha_p \bar w = 0.32$) for position coordinates at $\lambda = 3.4$. For comparison, we chose the CCM controller obtained at $\hat \lambda = 3.4$, which yielded the smallest tube size gain of $0.58$ (and thus a tube of a $0.58$ m radius ball) for the positions state among all CCM controllers. 

Trajectory planning was performed by first computing
a waypoint path using geometric FMT$^\ast$ \cite{janson2015FMT-star}, and then smoothing this path using polynomial splines with the min-snap algorithm in \cite{richter2016polynomial-quadrotor}, with the Matlab codes in \cite{StanfordASL-RobustMP}. Finally, differential
flatness was leveraged to recover the open-loop (i.e., nominal) state and control trajectories. Collision checking
was performed by leveraging the configuration space representation of the obstacles, i.e., polytopes,
inflated by the 
projection of
the tube bound onto position coordinates, i.e., $\alpha_p \norm{w}$. For simulation, CCM and RCCM-P were implemented using zero-order-hold at 250 Hz.  The Euler angular rates $(\dot \phi,\dot \theta)$ computed by the CCM/RCCM-P, and $\dot \psi$ (which was set to constant zero) were  converted to desired body rates, which were then sent to a  low-level proportional  controller. The P controller computed the three moments to track the desired body rates. The moments and the total thrust (from integrating $\dot \tau$) were then applied as ultimate inputs to the quadrotor, which consists of 12 states. With CCM and RCCM-P, the planned and actual trajectories together with the projected tubes for position coordinates under a wind disturbance, artificially simulated by $w(t)=\left(0.8+0.2\sin(0.2\pi t)\right)[\sin(45^\circ),-\cos(45^\circ),0]$,  are depicted in \cref{fig:traj-qud}. One can see that the actual position trajectories were fairly close to the nominal ones and consistently stayed in the ellipsoidal tubes under both CCM and RCCM-P. However, due to the tighter tube associated with RCCM-P, the planner was able to find a shorter collision-free trajectory with travel time of 9.68 second under RCCM-P, while the planned trajectory under CCM was more circuitous and took 13.92 second to finish. 
To evaluate the conservatism of the derived theoretical bounds associated with  RCCM-P, we performed tests under incrementally increased disturbances. The results are shown in \cref{fig:traj-err-quad,fig:rot-angle-quad,fig:control-quad,fig:energy}. According to \cref{fig:traj-err-quad}, the tracking error under a disturbance bounded by $\bar w =6$ violated the theoretical bound associated with  $\bar w=1$ for the tested disturbance setting, indicating the tube for position states was not very conservative. On the other hand, \cref{fig:rot-angle-quad} reveals that the projected tube for rotational angles, $\phi$ and $\theta$  was more conservative compared to the position states. As shown in \cref{fig:control-quad}, the trajectory of the derivative of the total thrust as one of the control inputs was quite smooth and amenable to implementation. 

\section{Conclusion}\label{sec:conclusion}
For nonlinear control-affine systems subject to bounded disturbances, this paper presents robust control contraction metrics (RCCM) for designing trajectory tracking controllers  with {explicit disturbance rejection} properties and certificate tubes for both states and inputs. The tubes are valid for any feasible nominal trajectories, and are guaranteed to contain the actual trajectories despite disturbances. Both the  RCCM  controller and the tubes can be computed, offline, by solving convex optimization problems and conveniently incorporated into a feedback motion planning framework. 
Simulation results for a planar VTOL vehicle and a 3D quadrotor verify the effectiveness the proposed approach.

Future work includes testing of the proposed method on real hardware and leveraging the proposed method to deal with unmatched uncertainties within an adaptive control framework \cite{lakshmanan2020safe}. 
\bibliographystyle{ieeetr}
\bibliography{bib/refs}

\end{document}